\newif\iflong\longtrue

\documentclass{article}

\usepackage{arxiv}
\usepackage[utf8]{inputenc}
\usepackage[T1]{fontenc}
\usepackage{amssymb}
\usepackage[arrow, matrix, curve]{xy}
\usepackage[]{todonotes}
\usepackage{graphicx}
\usepackage{float}
\usepackage{dsfont}
\usepackage{braket}
\usepackage{tikz}
\usetikzlibrary{arrows,automata}
\usepackage{verbatim}
\usepackage{mathrsfs}
\usepackage{xspace}
\usepackage{mathtools}
\usepackage{nicefrac}
\usepackage{apxproof}
\usepackage[noend]{algpseudocode}
\usepackage{booktabs}
\usepackage{skull}
\usepackage{apxproof}
\usepackage{cleveref}


\usepackage{amsthm}
\theoremstyle{definition}

\theoremstyle{plain}
\newtheorem{theorem}{Theorem}

\theoremstyle{remark}

\newtheorem{lemma}{Lemma}

\newif\iflncs\lncsfalse
\newcommand{\lncsqed}{\iflncs\hfill$\qed$\fi}

\newcommand{\Oh}{\ensuremath{\mathcal{O}}}

\newcommand{\LCM}{\ensuremath{\operatorname{lcm}}\xspace}
\newcommand{\Atr}{\ensuremath{\mathit{Attr}}\xspace}


\newcommand{\todomi}[2][]{\todo[inline,#1,color=green!50]{#2}}

\definecolor{myred}{rgb}{1,0.25,0.25}

\newtheorem{myclaim}{Claim}{\itshape}{\rmfamily}
\newenvironment{claimproof}{{\noindent\textit{Proof. }}}{\hfill$\blacksquare$}

\newcommand{\prob}[3]{\begin{quote}  \textsc{#1}\\  \textbf{Input:} #2\\  \textbf{Question:} #3\end{quote}}

\newcommand{\W}[1]{\ensuremath{\mathrm{W}[#1]}\xspace}
\newcommand\NP{\ensuremath{\mathrm{NP}}\xspace}
\newcommand\PSPACE{\ensuremath{\mathrm{PSPACE}}\xspace}

\newcommand\PTIME{\ensuremath{\mathrm{PTIME}}\xspace}
\newcommand\EXPTIME{\ensuremath{\mathrm{EXPTIME}}\xspace}

\newcommand{\modid}[2]{\ensuremath{{#1}\smash{\left[{#2}\right]^{\circ}}}}

\newcommand{\CNR}{\textsc{Periodic Cop \& Robber}\xspace}
\newcommand{\PCA}{\textsc{Periodic Character Alignment}\xspace}

\newcommand{\att}{\ensuremath{\operatorname{Attr}}}










	\title{A Timecop's Chase Around the Table}
	\author{Nils Morawietz\thanks{Supported by Deutsche Forschungsgemeinschaft, project OPERAH, KO~3669/5-1.}\ \ and Petra Wolf\thanks{Supported by Deutsche Forschungsgemeinschaft, project	FE 560/9-1.} \\
		$^*$Philipps-Universität Marburg, Fachbereich Mathematik und Informatik,  Marburg, Germany\\ {morawietz@informatik.uni-marburg.de} \and
	$^\dagger$Universität Trier, Fachbereich 4, Informatikwissenschaften, Trier, Germany\\
	{wolfp@informatik.uni-trier.de}}
\begin{document}
	%
	%
	%
	%
	%
	
\maketitle  
	\begin{abstract}
		We consider the cops and robbers game variant consisting of one cop and one robber on time-varying graphs (TVG). The considered TVGs are edge periodic graphs, i.e., for each edge, a binary string $s_e$ determines in which time step the edge is present, namely the edge $e$ is present in time step $t$ if and only if the string $s_e$ contains a $1$ at position $t \mod |s_e|$. This periodicity allows for a compact representation of an infinite TVG. 
		We prove that even for very simple underlying graphs, i.e., directed and undirected cycles the problem whether a cop-winning strategy exists is \NP-hard and \W1-hard parameterized by the number of vertices. 
		Our second main result are matching lower bounds for the ratio between the length of the underlying cycle and the least common multiple (\LCM) of the lengths of binary strings describing edge-periodicies over which the graph is robber-winning. Our third main result improves the previously known \EXPTIME upper bound for \CNR on general edge periodic graphs to \PSPACE-membership.

	\end{abstract}

\section{Introduction}
In general, a \emph{time-varying graph} (TVG) describes a graph that varies over time. For most applications, this variation is limited to the availability or weight of edges meaning that edges are only present at certain time-steps or the time needed to cross an edge changes over time. 
TVGs are of great interest in the area of \emph{dynamic networks}~\cite{DBLP:journals/paapp/CasteigtsFQS12,ganguly2009dynamics,holme2015modern,holme2012temporal} such as \emph{mobile ad hoc networks}~\cite{DBLP:journals/comsur/Zhang06} and \emph{vehicular networks} modeling traffic load factors on a road network~\cite{DBLP:conf/edbt/DingYQ08}. 
In those networks, the topology naturally changes over time and TVGs are used to reflect this dynamic behavior. 
Quite recently, TVGs became of interest in the context of graph games such as competitive diffusion games and Voronoi games~\cite{BF+21}.
There are plenty of representations for TVGs in the literature which are not equivalent in general. 
For instance, in~\cite{DBLP:journals/paapp/CasteigtsFQS12} a TVG is defined as a tuple $\mathcal{G} = (V, E, \mathcal{T}, \rho, \zeta)$ where $V$ is a set of vertices, $E \subseteq V \times V \times L$ is a set of labeled edges (with labels from a set $L$), $\mathcal{T} \subseteq \mathbb{T}$ is the \emph{lifetime} of the graph, $\mathbb{T}$ is the temporal domain and assumed to be $\mathbb{N}$ for discrete systems and $\mathbb{R}^+$ for continuous systems, $\rho \colon E \times \mathcal{T} \to \{0,1\}$ is the \emph{presence function} indicating whether an edge $e$ is present in time step $t$, and $\zeta \colon E \times \mathcal{T} \to \mathbb{T}$ is the latency function indicating the time needed to cross edge $e$ in time step $t$. We call the graph $G = (V, E)$ the \emph{underlying graph} of $\mathcal{G}$.
The literature has not yet agreed on how the function $\rho$ (and $\zeta$) are given in the input. This is of significant importance, if $\rho$ exhibits periodicy with respect to single edges in the context of computational complexity.
We say that a TVG belongs to the class of TVGs featuring \emph{periodicity of edges}, defined as class~8 in~\cite{DBLP:journals/paapp/CasteigtsFQS12}, if $\forall e\in E, \forall t \in \mathcal{T}, \forall k \in \mathbb{N}, \rho(e, t) = \rho(e, t+ kp_e)$ for some $p_e \in \mathbb{T}$, depending on $e$, and the underlying graph $G$ is connected. 
For these TVGs, the function $\rho$ can be represented for each edge $e\in E$ as a binary string of size $p_e$ concatenating the values of $\rho(e, t)$ for $0 \leq t <p_e$. 
Note that the period of the whole graph $\mathcal{G}$ is then the least common multiple (\LCM for short) of all string lengths $p_e$ describing edge periods. Therefore, the underlying graph ${G}$ of $\mathcal{G}$ can have exponentially many different sub-graphs $G_t$ representing a snapshot of $\mathcal{G}$ at time $t$.
This exponential blow-up is a huge challenge in determining the precise complexity of problems for TVGs featuring periodicity of edges as discussed in more detail in Section~\ref{sec:upperbounds} and \ref{sec:discussion}. 
Often, for general TVGs a representation containing all sub-graphs representing snapshots over the whole lifetime of the graph is chosen when the complexity of decision problems over TVGs are considered~\cite{bhadra2003complexity,DBLP:journals/tcs/MichailS16}.
An approach to unify the representation of TVGs is given in~\cite{wehmuth2015unifying}, also including the existence of vertices being effected over time. 
This approach represents $\rho(e, t)$ by enhancing an edge $e = (u, v)$ with the departure time $t_d$ at $u$ and the arrival time $t_a$ at $v$, where $t_a$ might be smaller than $t_d$ in order to model periodicity. 
As for TVGs with periodicity of edges where $\rho$ is represented as a binary string for each edge the periodicity of the TVG $\mathcal{G}$ might be exponential in its representation using the approach in~\cite{wehmuth2015unifying} would also cause an exponential blow-up in the 
representation of $\mathcal{G}$, as a decrement of the time value could only be used after a whole period of the graph, rather than after the period of one edge.
An other class of TVGs based on periodicity was considered in the field of robotics to model motion planning tasks if time dependent obstacles are present~\cite{sutner1988motion}. There, the availability of the vertices in the graph changes periodically and each edge needs a constant number of time steps to be crossed. An edge $e = \{u, v\}$ is only present if in the time span needed to cross $e$ both endpoints are $u$ and $v$ are still present.
In~\cite{sutner1988motion} the periodic function describing the availability of a vertex and the function describing the time needed to cross an edge is represented by an on-line program and can hence handle values exponential in their representation. 
This is crucial in the \PSPACE-hardness proof of the reachability problem for graphs in this class presented in~\cite{sutner1988motion}. There, the hardness is obtained by a reduction from the halting problem for linear space-bounded deterministic Turing machines where a configuration of the Turing machine is encoded in the time step. In the reduction, the periodicity of a single vertex as well as the time needed to cross an edge is of value exponential in the tape length-bound. Note that this representation of periodicity is exponentially more compact than in our setting and thus the result of~\cite{sutner1988motion} does not translate to our setting.

We will stick in the following to the model describing TVG featuring periodicity of edges where the function $\rho(e, t)$ is represented as a binary string. For this representation, Erlebach and Spooner~\cite{erlebach2020game} introduced recently a variant of the famous cops and robber problem which is intensively studied for static graphs~\cite{bonato2011game}. 
In the static setting, the game is played on a given graph and includes a single robber and a set of $k$-cops. The cops and robber occupy vertices and the cops choose their vertices first. Then, in each round the players alternate turns and the cops move first. Thereby, each cop can move to an adjacent vertex or pass and stay on her vertex. The same holds for the robber. We say that a graph is \emph{$k$-cop-winning}, if there exists a strategy for the $k$ cops in which they finally catch the robber, i.e., a cop occupies the same vertex as the robber. 
If the context is clear, we call a 1-cop-winning graph a \emph{cop-winning} graph.
If a graph is not cop-winning, we call it \emph{robber-winning}. 
A special interest on the game of cops and robbers lied in characterizing graphs which are $k$-cop-winning. While for one cop, the cop-winning graphs where understood in 1978 and independently 1983~\cite{quilliot1978jeux,DBLP:journals/dm/NowakowskiW83} as those featuring a special kind of ordering on the vertex set, called a \emph{cop-win} or \emph{eliminating} order, the case for $k$ cops was long open and solved in 2009 by exploiting a linear structure of a certain power of the graph~\cite{DBLP:journals/dm/ClarkeM12}.

In 2020 Erlebach and Spooner~\cite{erlebach2020game} connected the two discussed graph-theoretical topics of great interest by introducing a cops and robber game for \emph{edge periodic graphs}. These are TVGs featuring periodicity of edges as discussed above. They gave an algorithm which determines if the given edge periodic graph is $k$-cop-winning, which runs in time $\mathcal{O}(\LCM \cdot k\cdot |G|^{k+2})$, where $\LCM$ is the least common multiple of all length of binary strings describing $\rho(e, t)$ and $G$ is the underlying graph. As $\LCM$ can be exponentially in the input size, they proposed the question of whether this problem is \NP-hard. This question was answered positively for a one-cop version, also in 2020, by Morawietz, Rehs, and Weller~\cite{DBLP:conf/mfcs/MorawietzRW20} for TVGs of which the underlying graph has a constant size vertex cover or where two edges have to be removed to obtain a cycle. Moreover, they showed that the problem is~\W1-hard when parameterized by 
the size of the underlying graph $G$ even in these restricted cases, that is, they showed that there is presumably no algorithm solving the problem in time~$f(|G|)\cdot n^{\Oh(1)}$ for any computable function~$f$.
In other words, the exponential growth of the running time of every algorithm solving the problem has to  depend on  the length of binary strings describing $\rho(e, t)$.

\paragraph{Our contribution.} In this work, we show, that the \NP-hardness holds for even simpler classes of edge periodic graphs, namely for directed and undirected cycles. 
Moreover, we show that the~\W1-hardness when parameterized by $|G|$ even holds for these restricted instances (Section~\ref{sec:hardness}).
The quite restricted class of undirected cycles was also studied in~\cite{erlebach2020game} where an upper bound on the length of the cycle with respect to the \LCM was given for which each edge periodic cycle is robber winning. To be more precise, for an edge periodic cycle on $n$ vertices it holds that if $n \geq 2\cdot \ell \cdot \LCM$, then the graph is robber winning. Here, $\ell= 1$ if $\LCM$ is at least two times the longest size of a binary string describing $\rho(e, t)$ and $\ell=2$, otherwise. For these upper bounds only non-matching lower bounds where given, i.e., a family of cop-winning edge periodic cycles with length $3\cdot \LCM$ for $\ell= 2$ and $1.5 \cdot \LCM$ for $\ell=1$ are given. These lower bounds left a gap of size 
$0.5\cdot \ell \cdot \LCM$.
In this work, we show that the given upper bounds are tight by closing this gap by giving families of cop-winning edge periodic cycles of length $2 \cdot \ell \cdot \LCM -1$ (Section~\ref{sec:lowerBounds}).
Finally, we improve the currently best \EXPTIME upper bound shown in~\cite{erlebach2020game} for the problem, whether a given edge periodic graph is cop-winning to \PSPACE (Section~\ref{sec:upperbounds}). We conclude with a discussion on the restricted class of directed edge periodic cycles indicating that due to the compact representation of the edge periodic graphs, which does not introduce additional amounts of freedom, the standard complexity classes, such as \NP and \PSPACE might not be suitable to precisely characterize the complexity of this problem (Section~\ref{sec:discussion}).


\section{Preliminaries}
For a string $w= w_0w_1\dots w_n$ with $w_i \in \{0, 1\}$, for $0 \leq i \leq n$, we denote with $w[i]$ the symbol $w_i$ at position $i$ in $w$.
We write the concatenation of strings~$u$ and~$v$ as~$u \cdot v$. 
For non-negative integers $i \leq j$ we denote with $[i, j]$ the interval of natural numbers $n$ with $i \leq n \leq j$.

An \emph{edge periodic (temporal) graph} $\mathcal{G}=(V,E,\tau)$ (see also \cite{erlebach2020game})
consists of a graph $G=(V,E)$ (called the \emph{underlying graph})
and a function~$\tau:E \to \{0,1\}^*$ where $\tau$ maps each edge~$e$ to a sequence~$\tau(e)\in\{0,1\}^*$
such that $e$~exists in a time step~$t\geq 0$ if and only if $\modid{\tau(e)}{t} = 1$, where~$\modid{\tau(e)}{t} := \tau(e)[t \mod |\tau(e)|]$.
For an edge $e$ and non-negative integers $i \leq j$ we inductively define  $\modid{\tau(e)}{[i,j]}=\modid{\tau(e)}{i} \cdot \modid{\tau(e)}{[i+1,j]}$ and $\modid{\tau(e)}{[j,j]} = \modid{\tau(e)}{j}$.
If~$\tau(e) = 1$, we call~$e$ a~\emph{$1$-edge}.
Every edge~$e$ exists in at least one time step, that is, for each edge~$e$ there is some $t_e\in [0, |\tau(e)|-1]$ with $\tau(e)[t_e] = 1$. 
We might abbreviate $i$ repetitions of the same symbol $\sigma$ in $\tau(e)$ as $\sigma^i$.
Let $L_{\mathcal{G}} = \{|\tau(e)| \mid e \in E\}$ be the set of all edge periods of some edge periodic graph $\mathcal{G}=(V, E, \tau)$ and let $\LCM(L_{\mathcal{G}})$ be the least common multiple of all periods in $L_{\mathcal{G}}$.
We call an edge periodic graph $\mathcal{G}$ with an underlying graph consisting of a single cycle an \emph{edge periodic cycle}.
We denote with $\mathcal{G}(t)$ the sub-graph of $G$ present in time step $t$. We do not assume that $\mathcal{G}$ is connected in any time step. We will discuss directed and undirected edge periodic graphs. If not stated otherwise, we assume an edge periodic graph to be undirected. We illustrate the notion of edge periodic cycles in Figure~\ref{fig:example} showing an edge periodic cycle $\mathcal{G}$ together with $\mathcal{G}(t)$ for the first 5 time steps.
We now define the considered cops and robbers variant on edge periodic graphs with one single cop. Here, first the cop chooses her start vertex in $\mathcal{G}(0)$, then the robber chooses his start vertex in $\mathcal{G}(0)$. Then, in each time step $t$, the cop and robber move to an adjacent vertex over an edge which is present in $\mathcal{G}(t)$ or pass and stay on their vertex. 
 Thereby, the cop moves first, followed by the robber. We say that the cop catches the robber, if there is some time step $t$ for which the cop and the robber are on the same vertex after the cop moved. If the cop has a strategy to catch the robber regardless which start vertex the robber chooses, we say that $\mathcal{G}$ is \emph{cop-winning} and call the strategy implemented by the cop a \emph{cop-winning strategy}. If for all cop start vertices, there exists a start vertex and strategy for the robber to elude the cop indefinitely, we call $\mathcal{G}$ \emph{robber-winning}. The described game is a zero-sum game, i.e., $\mathcal{G}$ is either cop-winning or robber-winning.
\prob{\CNR}{An edge periodic graph $\mathcal{G}=(V,E,\tau)$.}
{Is $\mathcal{G}$ cop-winning?}
\begin{figure}[t]
	\centering
	\begin{tikzpicture}[node distance=20pt]
		\node[state, fill=black,minimum size=1pt] (0) at (0,0) {};
		\node[state, fill=black,minimum size=1pt, right of=0] (1){};
		\node[state, fill=black,minimum size=1pt, below of=1] (2){};
		\node[state, fill=black,minimum size=1pt, below of=2] (3){};
		\node[state, fill=black,minimum size=1pt, left of=3] (4){};
		\node[state, fill=black,minimum size=1pt, above of=4] (5){};
		\node[color=white] at (0.35,-2) {t};
		\path 
		(0) edge node[above] {01} (1) 
		(1) edge node[right] {1} (2)
		(2) edge node[right] {1011} (3)
		(3) edge node[below] {010} (4)
		(4) edge node[left] {10} (5)
		(5) edge node[left] {1110} (0);
	\end{tikzpicture}
\hspace{0.4cm}
	\begin{tikzpicture}[node distance=20pt]
	\node[state, fill=black,minimum size=1pt] (0) at (0,0) {};
	\node[state, fill=black,minimum size=1pt, right of=0] (1){};
	\node[state, fill=black,minimum size=1pt, below of=1] (2){};
	\node[state, fill=black,minimum size=1pt, below of=2] (3){};
	\node[state, fill=black,minimum size=1pt, left of=3] (4){};
	\node[state, fill=black,minimum size=1pt, above of=4] (5){};
	\node at (0.35,-2) {$t = 0$};
	\path 
	(1) edge (2)
	(2) edge (3)
	(4) edge (5)
	(5) edge (0);
\end{tikzpicture}
\hspace{0.4cm}
\begin{tikzpicture}[node distance=20pt]
	\node[state, fill=black,minimum size=1pt] (0) at (0,0) {};
	\node[state, fill=black,minimum size=1pt, right of=0] (1){};
	\node[state, fill=black,minimum size=1pt, below of=1] (2){};
	\node[state, fill=black,minimum size=1pt, below of=2] (3){};
	\node[state, fill=black,minimum size=1pt, left of=3] (4){};
	\node[state, fill=black,minimum size=1pt, above of=4] (5){};
	\node at (0.35,-2) {$t = 1$};
	\path 
	(0) edge (1) 
	(1) edge (2)
	(3) edge (4)
	(5) edge (0);
\end{tikzpicture}	
\hspace{0.4cm}
\begin{tikzpicture}[node distance=20pt]
\node[state, fill=black,minimum size=1pt] (0) at (0,0) {};
\node[state, fill=black,minimum size=1pt, right of=0] (1){};
\node[state, fill=black,minimum size=1pt, below of=1] (2){};
\node[state, fill=black,minimum size=1pt, below of=2] (3){};
\node[state, fill=black,minimum size=1pt, left of=3] (4){};
\node[state, fill=black,minimum size=1pt, above of=4] (5){};
\node at (0.35,-2) {$t = 2$};
\path 
(1) edge (2)
(2) edge (3)
(4) edge (5)
(5) edge (0);
\end{tikzpicture}
\hspace{0.4cm}
\begin{tikzpicture}[node distance=20pt]
	\node[state, fill=black,minimum size=1pt] (0) at (0,0) {};
	\node[state, fill=black,minimum size=1pt, right of=0] (1){};
	\node[state, fill=black,minimum size=1pt, below of=1] (2){};
	\node[state, fill=black,minimum size=1pt, below of=2] (3){};
	\node[state, fill=black,minimum size=1pt, left of=3] (4){};
	\node[state, fill=black,minimum size=1pt, above of=4] (5){};
	\node at (0.35,-2) {$t = 3$};
	\path 
	(0) edge (1) 
	(1) edge (2)
	(2) edge (3)
;
\end{tikzpicture}
\hspace{0.4cm}
\begin{tikzpicture}[node distance=20pt]
	\node[state, fill=black,minimum size=1pt] (0) at (0,0) {};
	\node[state, fill=black,minimum size=1pt, right of=0] (1){};
	\node[state, fill=black,minimum size=1pt, below of=1] (2){};
	\node[state, fill=black,minimum size=1pt, below of=2] (3){};
	\node[state, fill=black,minimum size=1pt, left of=3] (4){};
	\node[state, fill=black,minimum size=1pt, above of=4] (5){};
	\node at (0.35,-2) {$t = 4$};
	\path 
	(1) edge (2)
	(2) edge (3)
	(3) edge (4)
	(4) edge (5)
	(5) edge (0);
\end{tikzpicture}
\caption{Edge periodic cycle $\mathcal{G}$ (left) together with snapshots $\mathcal{G}(t)$ for $0 \leq t \leq 4$.}
\label{fig:example}
\end{figure}
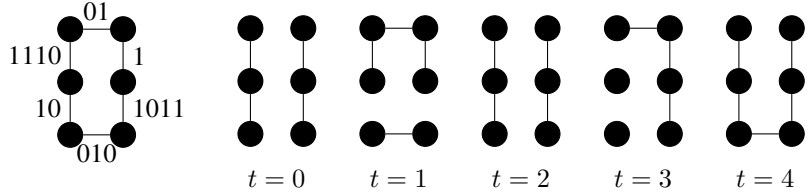
\section{It's hard to run around a table}
\label{sec:hardness}
In this section, we show that the \NP-hardness of \CNR already holds if the input graphs are very restricted. More precisely, we show that \CNR is \NP-hard and \W1-hard when parameterized by $|G|$, even for directed and undirected edge periodic cycles $\mathcal{G}$.
\begin{theorem}
	\label{thm:CycleNP-hard}
	\CNR on directed or undirected edge periodic cycles is~\NP-hard, and \W1-hard parameterized by the size of the underlying graph $G$. 
\end{theorem}
Both, the undirected and directed case of~\Cref{thm:CycleNP-hard} is shown by a reduction of the \PCA problem which was shown to be both \NP-hard and \W{1}-hard when parameterized by $|X|$ in~\cite{DBLP:conf/mfcs/MorawietzRW20}.
\prob{\PCA}{A finite set~$X\subseteq \{0,1\}^*$ of binary strings.}{Is there a position~$i$, such that~$\modid{x}{i} = 1$ for all~$x\in X$, \\where~$\modid{x}{i} := x[i \mod |x|]$?}
We begin with considering the case of undirected edge periodic cycles and then proceed by adapting the obtained construction for directed edge periodic cycles. 
\begin{lemma}
\label{thm:undirectedNP-hard}
\CNR on undirected edge periodic cycles is~\NP-hard and \W1-hard parameterized by the size of the underlying graph $G$. 
\end{lemma}
\begin{proof}
%
We first sketch the idea of the construction. It is helpful to consider Figure~\ref{fig:NP-hard} in the following. We represent each string in $X$ by an edge label. The constructed cycle will consist of two chains connected by two special edges. 
In the first chain, the elements in $X$ are increasingly listed in some fixed order as individual edge labels each. In the second chain the same edge labels are listed decreasingly in the same order. This will allow the cop and the robber to occupy antipolar vertices with the same edge labels on incident edges. Hence, while the cop is on one chain and the robber on the other chain, the robber can mimic the movements of the cop.
The two chains are connected by two special edges for which their edge labels are complementary in one position of the labels and identical in all other positions. This will allow the cop to switch between the chains in a certain time step while the robber is trapped on his chain. In this situation, the cop will be able to catch the robber if and only if there is a position~$i$, such that~$\modid{x}{i} = 1$ for all~$x\in X$, in which case all edges of the chains will be present for some period.

\begin{figure}
	\centering
	\begin{tikzpicture}[yscale=0.9]
		\tikzset{mystyle/.style={state, inner sep=2pt,minimum size=1pt, fill=black, node distance=30pt}}
		\node[mystyle] (0) {};
		\node[mystyle,right of=0] (1) {};
		\node[mystyle,right of=1] (2) {};
		\node[right of=2] (3) {$\dots$};
		\node[mystyle,right of=3] (4) {};
		\node[mystyle,right of=4] (5) {};
		\node[right of=5] (6) {$\dots$};
		\node[mystyle,right of=6] (7) {};
		\node[mystyle,right of=7] (8) {};
		
		\node[mystyle,below of=8,node distance=20pt] (a) {};
		\node[mystyle,left of=a] (b) {};
		\node[mystyle,left of=b] (c) {};
		\node[left of=c] (d) {$\dots$};
		\node[mystyle,left of=d] (e) {};
		\node[mystyle,left of=e] (f) {};
		\node[left of=f] (g) {$\dots$};
		\node[mystyle,left of=g] (h) {};
		\node[mystyle,left of=h] (i) {};
		
		\path
		(0) edge node[above] {$\xi(x_1)$} (1)
		(1) edge node[above] {$\xi(x_2)$} (2)
		(2) edge (3)
		(3) edge (4)				
		(4) edge node[above] {$\xi(x_i)$} (5)
		(5) edge (6)
		(6) edge (7)
		(7) edge node[above] {$\xi(x_m)$} (8)
		(8) edge node[right] {$0{0^m}1{0^m}$} (a)
		(a) edge node[below] {$\xi(x_1)$} (b)
		(b) edge node[below] {$\xi(x_2)$} (c)
		(c) edge (d)
		(d) edge (e)				
		(e) edge node[below] {$\xi(x_i)$} (f)
		(f) edge (g)
		(g) edge (h)
		(h) edge node[below] {$\xi(x_m)$} (i)
		(i) edge node[left] {$1{0^m}1{0^m}$} (0)
		;
	\end{tikzpicture}
	\caption{\CNR instance constructed from a \PCA instance with set of strings $X =\{x_1, \dots, x_m\}$ in the proof of~\Cref{thm:undirectedNP-hard}. For $x_j\in X$ the edge labels are defined as $\xi(x_j) := \xi(x_j[0]) \cdot \xi(x_j[1]) \cdot \ldots \cdot \xi(x_j[|x_j|-1])$, with $\xi(c) := 0{c^m}0{1^m}$ for $c\in\{0,1\}$. The upper chain corresponds to the vertices $r_j$ and the lower chain to the vertices $\ell_j$.}
	\label{fig:NP-hard}
\end{figure}
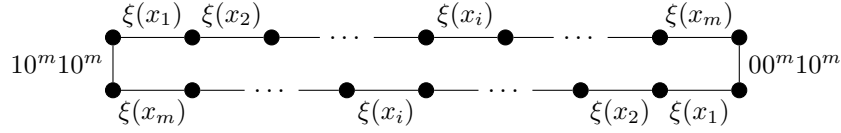

We now proceed with the formal proof.
Let~$X$ be an instance of~\PCA.
We describe how to construct in polynomial time an instance~$\mathcal{G}=(V,E, \tau)$ of~\CNR, where~$\mathcal{G}$ is an undirected edge periodic cycle such that~$X$ is a yes-instance of~\PCA if and only if~$\mathcal{G}$ is a yes-instance of~\CNR.

Let~$|X| = m$ and $\{x_1, \dots, x_{m}\}$ be the elements of~$X$.
We set~$V := \{\ell_j, r_j \mid 0 \leq j \leq m\}$ and~$E:=\{\{\ell_{j-1}, \ell_j\}, \{r_{j-1}, r_j\} \mid 1  \leq j \leq m\} \cup \{\{\ell_0, r_{m}\}, \{\ell_{m}, r_0\}\}$.
Next, we set~$\tau(\{\ell_0, r_{m}\}) := 10^{m}10^{m}$ and~$\tau(\{\ell_{m}, r_0\}) := 00^{m}10^{m}$.
Let~$\xi(c) := 0c^{m}01^{m}$ for all~$c \in\{0,1\}$.
Finally, we set~$\tau(\{\ell_{j-1}, \ell_j\}) := \tau(\{r_{j-1}, r_j\}) := \xi(x_j[0]) \cdot \xi(x_j[1]) \cdot \ldots \cdot \xi(x_j[|x_j|-1])$ for each~$x_j\in X$.
Note that the length of each edge label is divisible by~$q := 2m+2$.
For~$i \geq 0$, let~$T_i := [q \cdot i, q\cdot (i+1)-1]$ denote the~\emph{$i$-th time block}, that is, the~$q$ consecutive time steps starting from~$q\cdot i$.
Note that the~$j$-th edge label limited to the~$i$-th time block~$\modid{\tau(\{\ell_{j-1}, \ell_j\})}{T_i} = \modid{\tau(\{r_{j-1}, r_j\})}{T_i}$ is exactly~$\xi(\modid{x_j}{i})$.

Next, we show that~$X$ is a yes-instance of~\PCA if and only if~$\mathcal{G}$ is a yes-instance of~\CNR.

$(\Rightarrow)$
Let~$i$ be a position such that~$\modid{x}{i} = 1$ for all~$x\in X$.
We describe the winning strategy for the cop.
She should choose the vertex~$\ell_0$ as her start vertex and should never move until the beginning~$t$ of the~$i$-th time block~$T_i$.
Since~$\modid{x}{i} = 1$ for all~$x\in X$,~$\modid{\tau(\{\ell_{j-1}, \ell_j\})}{T_i} = \modid{\tau(\{r_{j-1}, r_j\})}{T_i} = \xi(1) = 01^m01^m$.
Consequently, in time step~$t$ only the edge~$\{\ell_0,r_m\}$ exists and in the following~$m$ time steps, all edges except~$\{\ell_0,r_m\}$ and~$\{\ell_m,r_0\}$ exist.

If the robber is currently on some vertex~$r_j$, then the cop should move to~$r_{m}$ in time step~$t$.
Otherwise, the cop should stay on~$\ell_0$ in this time step.
By the fact that the edge~$\{\ell_{m}, r_0\}$ does not exist in~time step~$t$, we obtain that, at the beginning of time step~$t + 1$, both players are either on vertices labeled with~$r$ or labeled with~$\ell$.
Since all edges of the two paths~$(\ell_0, \dots, \ell_{m})$ and~$(r_0, \dots, r_{m})$ exist in the time steps~$[t +1, t+m]$, the cop can catch the robber in at most~$m$ time steps, since neither~$\{\ell_0, r_{m}\}$ nor~$\{\ell_{m}, r_0\}$ exists in any of the time steps~$[t +1, t+m]$.
Consequently,~$\mathcal{G}$ is a yes-instance of~\CNR.

$(\Leftarrow)$
Suppose that~$X$ is a no-instance of~\PCA.
We describe a winning strategy for the robber.
In the following, we say that the vertex~$\ell_j$ is the~\emph{mirror vertex} of~$r_j$ and vice versa.
Moreover, we say that the robber \emph{mirrors the move} of the cop at some time step~$t$, if the cop is on the mirror vertex of the robber at the beginning of time step~$t$ and the robber moves to the mirror vertex of the vertex the cop ends on in time step~$t$. 

The start vertex of the robber should be the mirror vertex of the start vertex of the cop.
If it is possible, then the robber should always mirror the moves of the cop.

Note that the only move the robber \emph{cannot} mirror is if the cop traverses the edge~$\{\ell_{m}, r_0\}$ at the beginning of some~$i$-th time block.

We show that the robber has a strategy to end on the mirror vertex during the~$i$-th time block and evade the cop until then.

Assume w.l.o.g., that the cop moves from~$\ell_{m}$ to~$r_0$ and, thus, the robber is currently on~$r_{m}$.
Since~$X$ is a no-instance of~\PCA, there is at least one~$x_j\in X$ with~$\modid{x_j}{i} = 0$.
Hence,~$\modid{\tau(\{\ell_{j-1}, \ell_j\})}{T_i} = \modid{\tau(\{r_{j-1}, r_j\})}{T_i} = \xi(0) = 00^m01^m$.
Consequently, the cop cannot catch the robber in the first~$m+1$ time steps of the~$i$-th time block.
Hence, the robber should stay on this vertex until the beginning of time step~$q \cdot i + m + 1$.

If the cop moves from~$r_0$ to~$\ell_{m}$ in time step~$q \cdot i + m + 1$, the robber is again on the mirror vertex of the cop and is able to mirror all of the cops moves in the remaining time steps of this time block.
Otherwise, the cop stays on some vertex~$r_p$.
In this case, the robber should move to~$\ell_0$.
Since the edges~$\{\ell_0, r_m\}$ and~$\{\ell_m, r_0\}$ do not exist in the remaining time steps of this time block, the cop cannot catch the robber in this time block.
Moreover, since all edges of the path~$(\ell_0, \dots, \ell_{m})$ exist in the last~$m$ time steps of the~$i$-th time block, the robber can move along the path~$(\ell_0, \dots, \ell_{m})$ and reach the mirror vertex of the cop in at most~$m$ time steps.
Consequently, we can show via induction, that the robber has an infinite evasive strategy and, thus,~$\mathcal{G}$ is a no-instance of~\CNR.
Since \PCA is~\W1-hard when parameterized by~$|X|$ and~$|V| = |E| = 2\cdot |X| + 2$, we obtain that~\CNR is \W1-hard when parameterized by the size of the underlying graph of $\mathcal{G}$ even on undirected edge periodic cycles.\lncsqed
\end{proof}

Next, we adapt the previous construction for \emph{directed} edge periodic cycles. It is helpful to consider Figure~\ref{fig:NP-hard dir} in the following.
In the adaption, we only have one chain listing the elements of $X$. The end vertex of this chain is connected to a new vertex $s$ which is again connected to the start vertex of the chain. The edges incident with $s$ will act as the two edges connecting the two chains in the previous construction by delaying the robber, such that the cop can catch him if all edges corresponding to $X$ are present in some time period.
\begin{lemma}\label{thm:dir-NP}
	\CNR on \emph{directed} edge periodic cycles is~\NP-hard, and \W1-hard parameterized by the size of the underlying graph.
\end{lemma}
\begin{proof}

Again, we reduce from \PCA.
Let~$X$ be an instance of~\PCA.
We describe how to construct an instance~$\mathcal{G}=(V,E, \tau)$ of~\CNR, where~$\mathcal{G}$ is a directed edge periodic cycle.
%
Let~$|X|=m$ and $\{x_1, \dots, x_{m}\}$ be the elements of~$X$.
We set~$V := \{v_j \mid 0 \leq j \leq m\} \cup \{s\}$ and~$E:=\{(v_{j-1}, v_j) \mid 1  \leq j \leq m\} \cup \{(v_{m}, s), (s,v_{0})\}$.
Next, we set~$\tau((v_{m}, s)) := 0^{m}10^{m}0$ and~$\tau((s, v_{0})) := 0^{m}00^m1$.
Let~$\xi(c) := c^{m}01^{m}1$ for all~$c \in\{0,1\}$.
Finally, we set~$\tau((v_{j-1}, v_j)) := \xi(x_j[0]) \cdot \xi(x_j[1]) \cdot \ldots \cdot \xi(x_j[|x_j|-1])$ for each~$x_j\in X$.

Note that the length of each edge label is divisible by~$q := 2m+2$.
For~$t \geq 0$, let~$T_t := [q \cdot t, q\cdot (t+1) -1]$ denote the~\emph{$t$-th time block}, that is, the~$q$ consecutive time steps starting from~$q\cdot t$.
Note that the~$j$-th edge label limited to the~$t$-th time block~$\modid{\tau((v_{j-1},v_j\})}{T_t}$ is exactly~$\xi(\modid{x_j}{t})$.
Next, we show that~$X$ is a yes-instance of~\PCA if and only if~$\mathcal{G}$ is a yes-instance of~\CNR.

$(\Rightarrow)$ 
Let~$t$ be a position such that~$\modid{x}{t} = 1$ for all~$x\in X$.
We describe the winning strategy for the cop.
The cop should choose the vertex~$v_0$ as her start vertex and should never move until the beginning~$t^* := q\cdot t$ of the~$t$-th time block.
By construction and the fact that~$\modid{x_i}{t}=1$ for each~$x_i\in X$,~$\modid{\tau((v_{i-1}, v_{i}))}{T_t} = \xi(1) = 1^m01^m1$.
Hence, the cop can move from vertex~$v_i$ to vertex~$v_{i+1}$ in time step~$t^* + i$ for each~$i\in[0, m-1]$ and, thus, reach the vertex~$v_{m}$ in time step~$t^* + m-1$.
Moreover, the cop can then move to the vertex~$s$ in time step~$t^* + m$.
By construction,~$\modid{\tau((s, v_0))}{t^* +j} = 0$ for each~$j\in[0, m]$.
Hence, the cop has a winning strategy since she started at vertex~$v_0$ and moved over every vertex of~$V$ while the robber was not able to traverse the edge~$(s, v_0)$.

$(\Leftarrow)$
Suppose that for every position~$t$, there is some~$x_j\in X$ with~$\modid{x_j}{t} = 0$.
We show that the robber has a winning strategy.
%
For some time step, let $w_C$ and $w_R$ denote the vertex of the cop, respectively robber in this time step.
We call the vertex~$v_0$ \emph{safe} for all vertices of~$V\setminus \{v_0, s\}$, we call~$v_{m}$~\emph{safe} for~$v_0$ and~$s$, and we call~$s$~\emph{safe} for~$v_0$.
Let~$u_C$ be the start vertex of the cop, then the robber should choose a vertex which is safe for~$u_C$ as his start vertex.
\begin{figure}
	\centering
	\begin{tikzpicture}[yscale=0.9]
		\tikzset{mystyle/.style={state, inner sep=2pt,minimum size=1pt, fill=black, node distance=30pt}}
		\node[mystyle] (0) {};
		\node[mystyle,right of=0] (1) {};
		\node[mystyle,right of=1] (2) {};
		\node[right of=2] (3) {$\dots$};
		\node[mystyle,right of=3] (4) {};
		\node[mystyle,right of=4] (5) {};
		\node[right of=5] (6) {$\dots$};
		\node[mystyle,right of=6] (7) {};
		\node[mystyle,right of=7] (8) {};
		
		\node[mystyle,above of=4,node distance=25pt] (s) {};
		\node[above of=s,node distance=10pt] (s2) {$s$};
		
		\path
		(0) edge[->, >=latex] node[below] {$\xi(x_1)$} (1)
		(1) edge[->, >=latex] node[below] {$\xi(x_2)$} (2)
		(2) edge (3)
		(3) edge[->, >=latex] (4)				
		(4) edge[->, >=latex] node[below] {$\xi(x_i)$} (5)
		(5) edge (6)
		(6) edge[->, >=latex] (7)
		(7) edge[->, >=latex] node[below] {$\xi(x_m)$} (8)
		(8) edge[->, >=latex,bend angle=15, bend right] node[above] {${0^m}1{0^m}0$} (s)
		(s) edge[->, >=latex,bend angle=15, bend right] node[above] {${0^m}0{0^m}1$} (0)
		;
	\end{tikzpicture}
	\caption{\CNR instance constructed from a \PCA instance with set of strings $X =\{x_1, \dots, x_m\}$ in the proof of~\Cref{thm:dir-NP}. For $x_j\in X$ the edge labels are defined by the homomorphism $\xi(x_j) := \xi(x_j[0]) \cdot \xi(x_j[1]) \cdot \ldots \cdot \xi(x_j[|x_j|-1])$, with $\xi(c) := {c^m}0{1^m}1$ for $c\in\{0,1\}$.}
	\label{fig:NP-hard dir}
\end{figure}
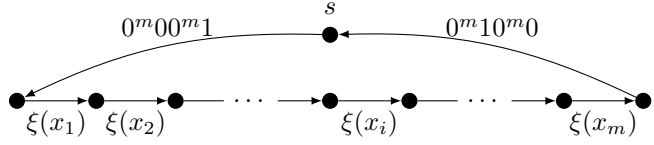
\begin{myclaim}\label{claim:directed robber strategy}
Let~$t^* = t \cdot q$ be the beginning of the~$t$-th time block for some $t\geq 0$, let~$u_C$ be the vertex of the cop at time step~$t^*$ and~$u_R$ be the vertex of the robber at time step~$t^*$.
If~$u_R$ is safe for~$u_C$, then the robber has a strategy such that the cop cannot catch him in the~$t$-th time block and the robber ends on a vertex that is safe for the vertex of the cop at the end of the~$t$-th time block.
\end{myclaim}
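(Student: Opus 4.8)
The plan is to exploit the two ``bottleneck'' edges~$(v_m,s)$ and~$(s,v_0)$ of the directed cycle, together with the no-instance property of~$X$, to show that the cop needs an \emph{entire} period to come around to~$v_0$. First I would put~$k:=t^*/q$ and, since~$X$ is a no-instance of~\PCA, fix~$j^*\in[1,m]$ with~$\modid{x_{j^*}}{k}=0$; then by the definition of~$\tau$ the block of~$\tau((v_{j^*-1},v_{j^*}))$ governing~$t^*,\dots,t^*+q-1$ is~$\xi(0)=0^{m+1}1^{m+1}$, so edge~$(v_{j^*-1},v_{j^*})$ is absent at \emph{all} of~$t^*,\dots,t^*+m$. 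I would also record three facts about the window~$[t^*,t^*+q-1]$: (i)~$(v_m,s)$ exists only at time~$t^*+m$; (ii)~$(s,v_0)$ exists only at time~$t^*+q-1$; (iii)~every path edge~$(v_{j-1},v_j)$ exists at \emph{every} time step~$t^*+m+1,\dots,t^*+q-1$, since positions~$m+1,\dots,2m+1$ of~$\xi(c)$ are~$1$ for both~$c\in\{0,1\}$. From these and the blocked edge I get the two levers I need: a cop reaching~$v_0$ ``from the $v$-side'' within the window must be on~$v_m$ at time~$t^*+m$, cross~$(v_m,s)$, and then sit on~$s$ until it can cross~$(s,v_0)$ at time~$t^*+q-1$ (so a lap costs the whole period); and a cop that is on~$v_0$ at time~$t^*$ cannot be on~$v_m$ at any of~$t^*,\dots,t^*+m$, hence never reaches~$s$ and never re-enters~$v_0$ during the window.

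With these in hand I would do a four-way case split on~$(u_R,u_C)$, which by the definition of ``safe'' must be one of~$(v_0,v_i)$ with~$i\in[1,m]$, $(v_m,s)$, $(v_m,v_0)$, or~$(s,v_0)$. If~$u_R=v_0$: the robber stays on~$v_0$ through time step~$t^*+m$ and then looks at the cop; if she sits on~$s$ she is pinned there until~$t^*+q-1$ by~(ii), so the robber walks~$v_0\to v_1\to\dots\to v_m$ over~$[t^*+m+1,t^*+2m]$ via~(iii) and stays on~$v_m$ (safe for~$v_0$ and~$s$); if she is not on~$s$ she is on some~$v_{j'}$ with~$j'\ge1$ and, by the first lever, cannot reach~$v_0$ this period, so the robber just stays on~$v_0$ (safe for every~$v_{j'}$, $j'\ge1$). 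If~$u_R=v_m$ and~$u_C=s$: every move off~$v_m$ steps onto the cop, so the robber stays on~$v_m$ the whole period; by~(ii) the cop stays on~$s$ until~$t^*+q-1$ and ends on~$s$ or~$v_0$, both safe for~$v_m$. The remaining pairs~$u_C=v_0$ with~$u_R\in\{v_m,s\}$ are the two cases already treated in the text: by the blocked edge the cop reaches neither~$v_m$ nor~$s$ during~$t^*,\dots,t^*+m$, so the robber crosses~$(v_m,s)$ to~$s$ at time~$t^*+m$ (if starting on~$v_m$) or stays on~$s$ (if starting there), waits on~$s$ until time step~$t^*+q-1$, and then stays on~$s$ if the cop's move ended on~$v_0$ and otherwise crosses~$(s,v_0)$ to~$v_0$; either outcome is safe. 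Iterating the claim period by period, with the robber's safe start vertex as the base case, then gives the infinite evasion strategy for the~$(\Leftarrow)$ direction.

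The part I expect to be the real work is the move-order book-keeping at the two ``seam'' steps~$t^*+m$ and~$t^*+q-1$: since the cop moves first, the robber may commit to ``march'' versus ``stay'' only \emph{after} observing whether the cop crossed~$(v_m,s)$ at time~$t^*+m$, and at~$t^*+q-1$ he likewise reacts to whether the cop's move ended on~$v_0$. Besides checking the final configuration, one also has to verify in each branch that the cop does not catch the robber \emph{during} the window, which by (i)--(iii) and the blocked edge reduces to checking that the cop's reachable vertex set over the relevant sub-window misses the robber's current vertex at every step---routine, but this is where the bulk of the casework sits. Beyond the bottleneck structure of~$(v_m,s)$ and~$(s,v_0)$ no further idea is involved.
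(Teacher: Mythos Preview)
Your proposal is correct and follows essentially the same approach as the paper's proof: the same four-way case split on the safe pairs $(u_R,u_C)$, the same use of the single availability slots of $(v_m,s)$ and $(s,v_0)$ together with the blocked path edge coming from the no-instance assumption, and the same robber strategies (wait-then-march in Case~1, stay on~$v_m$ in Case~2, cross to~$s$ and react at~$t^*+q-1$ in Cases~3 and~4). Your explicit recording of facts (i)--(iii) up front is a cosmetic improvement over the paper, which rederives them inside each case, but the argument is the same.
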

\begin{claimproof}
\textbf{Case 1:}~$u_C \in V \setminus \{s, v_0\}$ and~$u_R = v_0$\textbf{.}
The robber should wait on vertex~$v_0$ until the beginning of time step~$t^* + m$.
Since the edge~$(s, v_0)$ only exists in the last time step of the~$t$-th time block, the cop cannot catch the robber in any of these time steps.
If the cop does not traverse the edge~$(v_{m}, s)$ in time step~$t^* + m$, then the robber should stay on vertex~$v_0$ until the beginning of the next time block.
Since the edge~$(v_{m}, s)$ only exists in time steps~$t'$ with~$t' \mod q = m$, it follows that the cop ends on some vertex of~$V \setminus \{s, v_0\}$ at the end of the~$t$-th time block.
Thus, at the beginning of the~$(t+1)$-th time block, the vertex of the robber is safe for the vertex of the cop.

Otherwise, the cop traverses the edge~$(v_{m}, s)$ in time step~$t^* + m$. 
Then, the robber should traverse the edge~$(v_{i-1},v_{i})$ in time step~$t^* + m + i$ for each~$i\in [1, m]$, while the cop has to wait on $s$.
Hence, the robber reaches~$v_{m}$ in time step~$t^* + q - 2$.
In time step~$t^* + q - 1$, the cop can either stay on~$s$ or move to~$v_0$.
In both cases the robber should stay on~$v_{m}$ which is safe for both~$s$ and~$v_0$.

\textbf{Case 2:}~$u_C = s$ and~$u_R = v_{m}$\textbf{.}
Since the edge~$(s, v_0)$ only exists in the last time step of the~$t$-th time block, the cop has to stay on~$s$ until the beginning of time step~$t^* + q - 1$.
In time step~$t^* + q - 1$, the cop can either stay on~$s$ or move to~$v_0$.
In both cases the robber stays on~$v_{m}$ which is safe for both~$s$ and~$v_0$.

\textbf{Case 3:}~$u_C = v_0$ and~$u_R \in \{v_m, s\}$\textbf{.}
Let~$j\in [1, m]$ such that~$\modid{x_j}{t} = 0$, recall that by definition of~$\tau$ it follows that~$\modid{\tau((v_{j-1}, v_j))}{T_t} = 0^m01^m1$. 
Thus, the cop cannot reach the vertex~$v_{m}$ in the first~$m+1$ time steps of the~$t$-th time block.
In time step~$t^*+m$, the robber should stay on~$s$ if~$s$ is his current vertex or traverse the edge~$(v_{m}, s)$, otherwise.
Since this is the only time step in which this edge exists in the~$t$-th time block, the cop cannot catch the robber in this time block.
Until the beginning of time step~$t^* + q -1$, the robber should stay on~$s$.
If the cop ends her turn on vertex~$v_0$, then the robber should stay on~$s$.
Otherwise, the robber should traverse the edge~$(s, v_0)$ in time step~$t^* + q -1$.
In both cases, the vertex of the robber is safe for the vertex of the cop at the beginning of the~$(t+1)$-th time block.
\end{claimproof}

By using Claim~\ref{claim:directed robber strategy}, one can show via induction, that the robber has an infinite evasive strategy and, thus,~$\mathcal{G}$ is a no-instance of~\CNR.
%
\end{proof}

For the next section, we will stick to edge periodic cycles and consider families of cop-winning \emph{undirected} edge periodic cycles.


\section{Sharp bounds on the length required to ensure robber-winning edge periodic cycles}
\label{sec:lowerBounds}
In~\cite{erlebach2020game}, an upper bound on the cycle length of an edge periodic cycle in dependence of $\LCM(L_{\mathcal{G}})$, required to ensure an robber winning strategy, was given.
Namely, for $|V|=n$, the graph $\mathcal{G}$ is robber winning if $n \geq 2 \cdot \ell \cdot \LCM(L_{\mathcal{G}})$, where $\ell = 1$ if $\LCM(L_{\mathcal{G}}) \geq 2\cdot \max(L_{\mathcal{G}})$, and $\ell = 2$, otherwise (\cite[Theorem 3]{erlebach2020game}).
So far, these bounds where not sharp, as for instance, in~\cite{erlebach2020game}, the only lower bounds are given by cop winning strategies for families of edge periodic cycles with $n=1.5\cdot\LCM(L_{\mathcal{G}})$ for $\ell =1$ (\cite[Theorem 5]{erlebach2020game}), and $n=3\cdot \LCM(L_{\mathcal{G}})$ for $\ell=2$ and $\max(L_{\mathcal{G}})=\LCM(L_{\mathcal{G}})$ (\cite[Theorem 4]{erlebach2020game}).
We show that both upper bounds ensuring a robber winning strategy are sharp by presenting infinite families of cop-winning edge periodic cycles with $n=2\cdot \ell\cdot\LCM(L_{\mathcal{G}})-1$ vertices.
\begin{theorem}
	\label{thm:shar-bounds}
	For every $k\geq 3$ and~$\ell\in\{1,2\}$,
	 there exists a cop-winning edge periodic cycle $\mathcal{G}=(V, E, \tau)$ with 
	$\max(L_{\mathcal{G}}) = k$ and 
	$n =  2\cdot \ell \cdot \LCM(L_{\mathcal{G}})-1$ vertices, where~$\LCM(L_{\mathcal{G}}) \geq 2k$ if~$\ell = 1$, and~$\LCM(L_{\mathcal{G}}) = k$, otherwise.
\end{theorem}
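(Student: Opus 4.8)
The plan is to construct, for each $k\geq 2$ and $\ell\in\{1,2\}$, an explicit family of edge periodic cycles and exhibit a cop-winning strategy on them. I would start from the lower-bound constructions already present in the literature (\cite[Theorems 4 and 5]{erlebach2020game}), which give cop-winning cycles of length $1.5\cdot\LCM$ (for $\ell=1$) and $3\cdot\LCM$ (for $\ell=2$, with $\max(L_{\mathcal G})=\LCM$), and push them to the extremal length $2\ell\cdot\LCM-1$. The natural guess for the string alphabet: use edges whose periods are $1$ and $k$, so that $\LCM(L_{\mathcal G})=k$ and $\max(L_{\mathcal G})=k$; for the $\ell=1$ case we additionally need $\LCM\geq 2\max$, so there we instead want two coprime periods $a,b$ with $ab=k$ chosen so the $\ell=1$ hypothesis $\LCM\geq 2\max$ is met — e.g.\ periods like $2$ and an odd value, or simply handle $\ell=1$ with one period $k$ by noting the hypothesis distinction only changes the target length. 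I would treat the two cases $\ell=1$ and $\ell=2$ separately, since the target lengths $2\LCM-1$ and $4\LCM-1$ differ by exactly the factor of $2$ that the $\ell$-dichotomy introduces.

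The core idea of the cop strategy on a cycle is the standard one: the cop picks a vertex, waits, and uses the fact that at certain time steps the cycle is ``broken'' into a path by a missing edge; she then walks along the path and the robber, trapped on the same arc, is eventually caught because enough consecutive edges are simultaneously present in a window of $\Theta(n)$ consecutive time steps. So the construction must guarantee: (i) there is a time step $t_0$ at which exactly one cycle-edge is absent, splitting the cycle into a single path $P$ of length $n-1$; and (ii) for the following $\approx n-1$ time steps every edge of $P$ is present (so the cop can sweep it) while the two ``end'' edges that could let the robber escape around the other side stay absent long enough. This is exactly the pattern used in the \NP-hardness reduction earlier in the paper (two special edges $0^m010^m1$, $0^m100^m1$ and the $1^m00c^m1$ gadget), so I would reuse that design philosophy: a long block of $1$'s on the path edges aligned against a long block of $0$'s on the two escape edges, with the block length scaled to be just long enough for a cycle of $n=2\ell\cdot\LCM-1$ vertices.

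Concretely, for $\ell=2$ and period $k$, I would take $n=4k-1$ and assign: to $n-1$ of the edges a constant string, or a length-$k$ string of the form $1^{k-1}0$ (so each is absent only at times $\equiv k-1\bmod k$) aligned so that all $n-1$ of them are simultaneously present during a window of length roughly $2k$ starting at some $t_0$; and to the remaining edge a string forcing it to be absent throughout that window. One checks that $n-1=4k-2\leq 2k$ fails — so instead one splits the path edges into two halves swept in two successive windows using the periodicity, which is precisely why the bound is $4k-1$ and not $2k-1$: the cop needs two passes, one per half-period. The verification then amounts to: (a) computing the exact absence pattern of each edge, (b) showing the cop can always position herself on the arc containing the robber at the right phase, and (c) an induction / case analysis over which arc the robber is on and whether he has tried to jump the one live special edge.

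The main obstacle I expect is part (c): ruling out \emph{all} robber escape routes, i.e.\ verifying that whenever the cop is mid-sweep the robber genuinely cannot slip past either end of the path within the available window. This requires carefully choosing the phase offsets of the $n-1$ path strings relative to each other (they need not all be identical — a ``staircase'' offset, as in the $\xi$-gadget of the earlier proof, lets the cop chase at speed $1$ while the robber is blocked) and then a somewhat tedious but routine check that the two escape edges are dead exactly on the complementary time steps. I would organize this as an explicit strategy description followed by a short invariant-based argument (``at time $t_0+i$ the cop is at position $i$ along $P$ and the robber is in $[i,n-1]$''), mirroring the structure of the hardness proof's $(\Rightarrow)$ direction. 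The $\ell=1$ case is analogous but shorter, with $n=2k-1$, a single sweep, and the $\LCM\geq 2\max$ hypothesis used only to guarantee the sweep window of length $\geq n-1$ fits inside one period.
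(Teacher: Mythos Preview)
Your overall plan for $\ell=2$ is in the right ballpark, but your proposed construction is substantially more complicated than what the paper does and your $\ell=1$ case contains a real arithmetic error.

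For $\ell=2$, the paper's cycle on $4k-1$ vertices has only \emph{three} non-trivial edges (patterns $10^{k-1}$, $0^{k-1}1$, $10^{k-1}$, placed at positions $0/1$, $(2k-1)/2k$, and $(3k-2)/(3k-1)$ around the cycle); every other edge is the constant string $1$. The cop's strategy is not ``wait, then sweep a path'' but simply ``run continuously in one direction'' (clockwise or counterclockwise depending on which antipodal vertex the robber chose), and the verification is a short explicit table of positions rather than an invariant argument. Your staircase-of-offsets idea, borrowed from the hardness reduction, may well be made to work, but it is a harder route: you would need to coordinate $\Theta(k)$ offsets and still close the escape-route analysis you flag as the main obstacle, whereas the paper's three-edge construction sidesteps that entirely.

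For $\ell=1$ there is a genuine gap. You write that the target is $n=2k-1$ with ``a single sweep'', but the theorem fixes $\max(L_{\mathcal G})=k$, and $\ell=1$ forces $\LCM\geq 2k$; hence $n=2\LCM-1\geq 4k-1$, not $2k-1$. Your suggestion of two coprime periods $a,b$ with $ab=k$ would give $\max(L_{\mathcal G})=\max(a,b)<k$, violating the hypothesis. The paper's trick, which you are missing, is to \emph{reuse the same $4k-1$-vertex graph from the $\ell=2$ case} and modify a single edge (one the cop crosses only at odd time steps) to have period $2^{i+1}$ where $k=2^i j$ with $j$ odd; this pushes $\LCM$ to exactly $2k$ while keeping $\max=k$ and leaving the cop's winning run untouched. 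Without this observation your $\ell=1$ argument does not get off the ground.
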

In order to prove~\Cref{thm:shar-bounds}, we give families of edge periodic cycles for $\ell = 1$ and $\ell = 2$, each, beginning with $\ell = 2$, i.e., the case that $\LCM(L_{\mathcal{G}}) < 2\cdot \max(L_{\mathcal{G}})$.
\begin{lemma}
	\label{lem:famCopWin}
	For every $k\geq 2$ there exists an edge periodic cycle $\mathcal{G}=(V, E, \tau)$ with $\LCM(L_{\mathcal{G}})=k = \max(L_{\mathcal{G}})$, and $n =  4k-1$ vertices that is cop-winning.
\end{lemma}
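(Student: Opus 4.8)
The plan is to give, for each $k \ge 2$, an explicit edge periodic cycle $\mathcal{G}_k$ on $4k-1$ vertices together with an explicit winning strategy for the cop, and then to verify the strategy by a short case analysis. Concretely, $\mathcal{G}_k$ would consist of a distinguished ``home'' vertex $c$, two ``corridor'' arcs emanating from $c$, and a small constant-size ``barrier'' gadget joining the far ends of the two corridors; every corridor edge and every edge incident to $c$ is made permanently present ($\tau \equiv 1$, period $1$), while the barrier edges receive period-$k$ strings whose $1$'s are concentrated at a single (or at two carefully chosen) residue class(es) modulo $k$, so that $\LCM(L_{\mathcal{G}_k}) = k = \max(L_{\mathcal{G}_k})$. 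The two design goals for $\tau_k$ are: (a) while the cop sits on $c$ the robber is confined to one of the two corridors, because the two edges at $c$ are always present and the cop — a vertex the robber may not step onto — blocks the only short way around; and (b) in every period there is a block of $2k-1$ consecutive time steps during which the barrier gadget makes it impossible for the robber to get from one corridor to the other.

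With such a $\tau_k$ fixed, the cop's strategy is a \emph{timed sweep}. The cop waits on $c$, tracking the robber, until a time $t_0$ at which a ``sealed'' block of length $2k-1$ of the barrier is about to begin; by symmetry assume that at time $t_0$ the robber sits in corridor $A$ (the other corridor is symmetric, and each time the robber manages to cross the barrier during the wait the cop simply re-starts the bookkeeping). From $t_0$ onward the cop walks deterministically from $c$ into $A$ toward its far end. The invariant to prove is that after leaving $c$ the cop stays strictly between the robber and $c$ along $A$, so the robber is pushed monotonically toward the far end of $A$; since all corridor edges are permanently present the cop reaches that far end after exactly $|A|$ moves, and since $|A| \le 2k-1$ this occurs within the sealed block, during which the robber can neither slip back past the cop nor cross the barrier — hence he is caught. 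The only freedom the cop needs is the choice of $t_0 \bmod k$, which is available because the cop may wait arbitrarily long on $c$ while the robber remains confined to a corridor; termination of the ``re-start'' loop follows from a potential/induction argument on how often the robber can shuttle between corridors.

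The numerology that pins the bound to exactly $4k-1$ is the confrontation of two inequalities: each of the two corridors has at most $2k-1$ edges precisely when the cycle has at most $4k-1$ vertices (the barrier contributes the remaining constant number of vertices and edges), whereas a period-$k$ barrier can be ``sealed'' for at most $2k-1$ consecutive steps — the latter being the reason the barrier must be a multi-edge gadget rather than a single periodic edge, since one period-$k$ edge is absent for at most $k-1$ consecutive steps. Thus the sweep of a corridor fits inside the sealed block exactly when $n = 4k-1$, dovetailing with the upper bound $n \ge 2\ell\cdot\LCM(L_{\mathcal{G}}) \Rightarrow$ robber-winning recalled above. The main obstacle, where most of the work goes, is constructing the barrier gadget and proving (b) in the strong form actually needed: not merely that the robber cannot traverse the barrier end to end within the sealed block, but that he cannot make partial progress \emph{into} the gadget during the cop's wait and then complete the crossing — or retreat behind the cop — in a way that breaks the invariant. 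This forces a careful choice of which residues modulo $k$ carry the $1$'s on the individual barrier edges and a case analysis over the robber's possible positions relative to the cop and the barrier at time $t_0$, entirely analogous in spirit to (though more constrained than) the robber-strategy case analyses in the hardness proofs above.
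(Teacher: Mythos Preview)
Your proposal is an outline of a strategy, not a proof, and the outline has a genuine arithmetic gap that you do not close. You want the cop to sit on a home vertex~$c$ whose two incident edges are permanent, then sweep a corridor of at most~$2k-1$ edges while a constant-size barrier of period-$k$ edges is ``sealed'' for~$2k-1$ consecutive steps. But on a cycle with~$4k-1$ edges, two corridors of length~$2k-1$ already account for~$4k-2$ edges, leaving exactly one edge for the barrier; a single period-$k$ edge is absent for at most~$k-1$ consecutive steps, not~$2k-1$. If instead you give the barrier~$b\ge 2$ edges, the longer corridor can still have~$2k-1$ edges, but then the robber may sit on an \emph{interior} barrier vertex at your chosen time~$t_0$ and exit to the far corridor within~$k-1$ steps, well before your sweep completes. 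You flag this as ``the main obstacle'' and defer it to an unspecified case analysis; that is precisely the part that would constitute the proof, and nothing in the proposal indicates how to discharge it.

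The paper's proof takes an entirely different route. There is no home vertex with permanent incident edges, no waiting phase, and no barrier concentrated opposite the cop. Instead three period-$k$ edges are placed at carefully chosen, \emph{spread-out} positions on the cycle---one of them is the very edge~$(0,+1)$ incident to the cop's start vertex---with patterns~$10^{k-1}$,~$0^{k-1}1$,~$10^{k-1}$; every other edge is the constant~$1$. The cop's strategy is simply to run continuously in the direction of the robber from time step~$0$ onward. The point is not that the robber is trapped behind a sealed gadget while the cop sweeps unimpeded; rather, both players repeatedly hit periodic edges while running around the cycle, but the phases are chosen so that the cop loses only two steps per lap whereas the robber loses enough to be overtaken after one full lap. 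The verification is a short table tracking positions at the relevant time steps, with the two antipodal robber start vertices~$\pm(2k-1)$ handled separately. In particular, no ``restart'' argument or potential function is needed; the chase is a single deterministic trajectory.
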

\begin{proof}
	Consider the edge periodic cycle $\mathcal{G}_k=(V, E, \tau)$ depicted in Figure~\ref{fig:Cop-Win-4k-1} with $|V| = 4k-1$. 
	This graph admits a cop winning strategy if the cop picks the highlighted vertex with index 0 as her start vertex.
	The vertices are indexed by positive numbers indicating their clockwise distance to the start vertex of the cop, and with negative numbers indicating their counterclockwise distance.
	Let the cop pick vertex $0$. We consider the antipolar vertices $+(2k-1)$ and $-(2k-1)$ as potential start vertices of the robber. 
	We show that if the robber picks vertex $+(2k-1)$, then the cop has a winning strategy by continuously running clockwise, starting in time step zero, and if the robber picks vertex $-(2k-1)$, the same applies running counterclockwise. Note that these two positions represent extrema and being able to catch the robber at these vertices implies being able to catch him at all vertices in the graph. Table~\ref{tab:4k-1} shows the positions of the cop and robber for these strategies for $k\geq 4$. For each time step, the position after both players moved are depicted; $s$ is the start configuration. We abbreviate consecutive 1-edges and only depict the time steps and positions when one of the players reaches a non-trivial edge.
	For the cases of $k=2$ and $k=3$ the cop catches the robber earlier than depicted in Table~\ref{tab:4k-1}, namely in step $t=6$ clockwise and $t=8$ counterclockwise for~$k=2$ and in step $t=6$ clockwise and $t=9$ counterclockwise for~$k=3$ if the robber chooses the corresponding antipolar start vertices.
	Details on case $k=2$ and $k=3$, and a concrete example for $k=4$, can be found in the appendix.
	\lncsqed
\end{proof}
\begin{figure}[t]
	\centering
	\begin{tikzpicture}[node distance=15pt,yscale=0.9]
		\node[state,fill=red,minimum size=3pt] (0) at (5, 2) {C};
		\node[below of=0] {0};
		\node[state,fill=black,minimum size=3pt] (1) at (7, 2) {};
		\node[below of=1] (1'){+1};
		\node[below of=1'] {-(4$k$-2)};
		\node[state,fill=black,minimum size=3pt] (3) at (4, 0) {};
		\node[below of=3] (3') {+(2$k$)};
		\node[below of=3'] {-(2$k$-1)};
		\node[state,fill=black,minimum size=3pt] (2) at (6, 0) {};
		\node[below of=2] (2') {+(2$k$-1)};
		\node[below of=2'] {-(2$k$)};
		\node[state,fill=black,minimum size=3pt] (5) at (1, 1.5) {};
		\node[left of=5,node distance=2.5em] (5') {+($3k$-1)};
		\node[below of=5'] {-$k$};
		\node[state,fill=black,minimum size=3pt] (4) at (1, 0.5) {};
		\node[below of=4] (4') {+($3k$-2)};	
		\node[below of=4'] {-($k$+1)};	
		\path (0) edge node[above,fill=gray!20] {$10^{k-1}$} (1)
		(1) edge[dashed,out=0,in=90] (8,1) 
		(8,1) edge[dashed,out=-90,in=0](2)
		(2) edge node[above,fill=gray!20] {$0^{k-1}1$} (3)
		(3) edge[dashed,in=-30,out=180] (4)
		(4) edge node[right,fill=gray!20] {$10^{k-1}$} (5)
		(5) edge[dashed,out=30,in=180] (0)
		(0) edge[thick, right,fill=gray!20] (1)
		(2) edge[thick, right,fill=gray!20] (3)
		(4) edge[thick, right,fill=gray!20] (5)
		; 
	\end{tikzpicture}
\caption{Cycle with $4\cdot k-1$ vertices and $\LCM(L_{\mathcal{G}})=k$ with a cop winning strategy from the start vertex marked in red. Edges not drawn (depicted by dots) are 1-edges; for all other edges, $\tau(e)$ is explicitly noted (with gray background). The clockwise [counterclockwise] distance of each vertex to the start vertex of the cop is given as a positive [negative] number.}
\label{fig:Cop-Win-4k-1}
\end{figure}
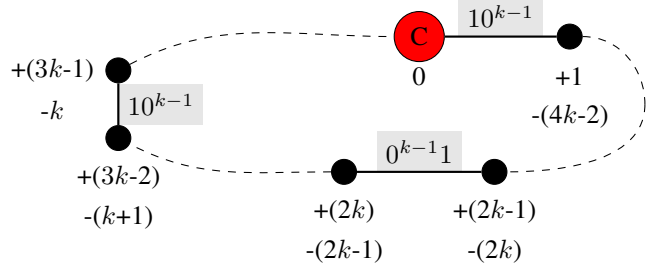
\begin{table}[t!]
	\centering
	\begin{tabular}{r || r | r}
		time step & pos.~cop & pos.~robber\\
		\hline
		$s$ &  $0$ & $2k-1$ \\
		$0$ & $1$ & $2k-1$ \\ 
		$k-1$ & $k$ & $2k$ \\
		$2k-3$ & $2k-2$ & $3k-2$ \\
		$2k-2$ & $2k-1$ & $3k-2$ \\
		$2k-1$ & $2k$ & $3k-2$ \\
		$2k$ & $2k+1$ & $3k-1$ \\
		$3k-3$ & $3k-2$ & $4k-4$ \\
		$3k$ & $3k-1$ & $0$ \\
		$3k+1$ & $3k$ & $0$ \\
		$4k-1$ & $4k-2$ & $0$ \\
		$4k$ & $0$ & $\skull$
	\end{tabular}
	\hfil
	\begin{tabular}{r || r | r}
		time step & pos.~cop & pos.~robber\\
		\hline
		$s$ & $0$ & $-(2k-1)$ \\
		$k-1$ & $-(k)$ & $-(2k)$ \\
		$k$ &  $-(k+1)$ & $-(2k+1)$ \\
		$2k-2$ & $-(2k-1)$ & $-(3k-1)$ \\
		$2k-1$ & $-(2k)$ & $-(3k)$ \\
		$3k-3$ & $-(3k-2)$ & $-(4k-2)$ \\
		$3k$ & $-(3k+1)$ & $0$ \\
		$4k-3$ & $-(4k-2)$ & $-(k-3)$ \\
		$4k$ & $0$ & $-(k)$ \\
		$5k-1$ & $-(k-1)$ & $-(k)$ \\
		$5k$ & $-(k)$ & $\skull$\\
		\multicolumn{3}{l}{\ } 
	\end{tabular}
	\caption{Time steps with corresponding positions of cop and robber in the edge periodic cycle depicted in Figure~\ref{fig:Cop-Win-4k-1}. All positions are \emph{after} moving in this time step. The time step $s$ denotes the start configuration. Recall that the cop moves first.}
	\label{tab:4k-1}
\end{table}
For the case that $\ell = 1$, i.e., when $\LCM(L_{\mathcal{G}}) \geq 2 \cdot \max(L_\mathcal{G})$, we slightly adapt the family of graphs depicted in Figure~\ref{fig:Cop-Win-4k-1}.
Note that for $\max(L_{\mathcal{G}})=2$ there is no edge periodic cycle $\mathcal{G}=(V, E, \tau)$ with $\LCM(L_{\mathcal{G}}) > \max(L_{\mathcal{G}}) = 2$.
\begin{lemma}
	For every $k\geq 3$ with~$k\neq 2^m$ for all $m \in \mathbb{N}$, there exists an edge periodic cycle $\mathcal{G}=(V, E, \tau)$ with $\LCM(L_{\mathcal{G}}) = 2\cdot \max(L_{\mathcal{G}}) = 2\cdot k$, and $n = 2\cdot 2k-1$ vertices that is cop-winning.
\end{lemma}
\begin{proof}
	For the case $\ell =1$ we introduce an artificial edge label in the edge periodic cycle in Figure~\ref{fig:Cop-Win-4k-1}, such that the $\LCM(L_{\mathcal{G}})$ is exactly $2k$. This edge will not affect the run of the cop. 
	Its purpose is to introduce a factor $2$ in the number of vertices compensating the missing factor $2$ from the variable $\ell$. 
	Therefore, note that the edge $e_{1,2}$ connecting vertex +$1$ and +$2$ is taken by the cop only once, in the clockwise run in time step $1$ and in the counterclockwise run in time step $4k-3$. 
	Hence, the cop only crosses the edge in an odd time step. 
	We can write $k$ as $k = 2^i \cdot j$ where $j$ is an odd number with~$j > 1$ since~$k \neq 2^m$. 
	Then, introducing a string $\tau(e_{1,2}) = 01^{2^{i+1}-1}$ of length $2^{i+1}$ yields a least common multiple of $\LCM(L_{\mathcal{G}}) = 2^{i+1} \cdot j = 2 \cdot k$. \lncsqed
\end{proof}

In the case of $\max(L_{\mathcal{G}}) = k=2^m$ for some $m \in \mathbb{N}$, it holds that for the smallest possible value of $\LCM(L_{\mathcal{G}})$ with $\LCM(L_{\mathcal{G}})>\max(L_{\mathcal{G}})$, we have $\LCM(L_{\mathcal{G}}) \geq 3 \cdot \max(L_{\mathcal{G}})$. Hence, in these cases we need a separate family of graphs.

 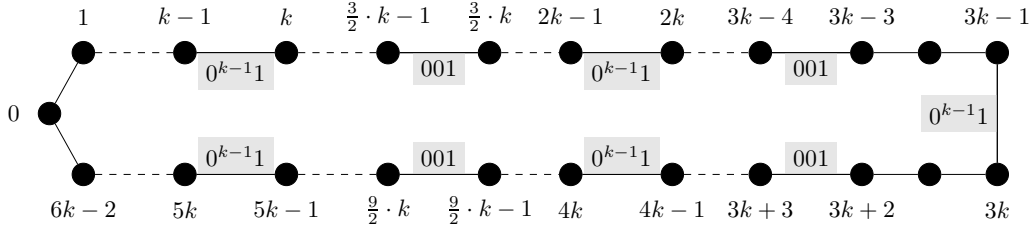
\begin{figure}
	\centering
		\scalebox{0.9}{
	\begin{tikzpicture}[node distance=15pt,,yscale=0.9, state/.style={circle, draw,minimum size=20pt, initial text = }]

		\node[state,fill=black,minimum size=3pt] (0) at (0, 1) {};
		
		\node[state,fill=black,minimum size=3pt] (o1) at (0.5, 2) {};
		\node[state,fill=black,minimum size=3pt] (u1) at (0.5, 0) {};	\draw (o1) edge (0);
\draw (u1) edge (0);

		\node[state,fill=black,minimum size=3pt] (o2) at (2, 2) {};	
		\node[state,fill=black,minimum size=3pt] (u2) at (2, 0) {};
		
		\node[state,fill=black,minimum size=3pt] (o3) at (3.5, 2) {};
		\node[state,fill=black,minimum size=3pt] (u3) at (3.5, 0) {};
\draw (u2) edge node[above,fill=gray!20] {$0^{k-1}1$} (u3);	
\draw (o2) edge node[below,fill=gray!20] {$0^{k-1}1$} (o3);	
\draw (u2) edge (u3);	
\draw (o2) edge (o3);

		\node[state,fill=black,minimum size=3pt] (o6) at (7.7, 2) {};
		\node[state,fill=black,minimum size=3pt] (u6) at (7.7, 0) {};	
		
		\node[state,fill=black,minimum size=3pt] (o7) at (9.2, 2) {};
		\node[state,fill=black,minimum size=3pt] (u7) at (9.2, 0) {};	
		
\draw (u6) edge node[above,fill=gray!20] {$0^{k-1}1$} (u7);	
\draw (o6) edge node[below,fill=gray!20] {$0^{k-1}1$} (o7);	
\draw (u6) edge (u7);	
\draw (o6) edge (o7);

		\node[state,fill=black,minimum size=3pt] (o4) at (5, 2) {};
		\node[state,fill=black,minimum size=3pt] (u4) at (5, 0) {};
		
		\node[state,fill=black,minimum size=3pt] (o5) at (6.5, 2) {};
		\node[state,fill=black,minimum size=3pt] (u5) at (6.5, 0) {};
\draw (u4) edge node[above,fill=gray!20] {$001$} (u5);	
\draw (o4) edge node[below,fill=gray!20] {$001$} (o5);	
\draw (u4) edge (u5);	
\draw (o4) edge (o5);

		\node[state,fill=black,minimum size=3pt] (o8) at (10.5, 2) {};
		\node[state,fill=black,minimum size=3pt] (u8) at (10.5, 0) {};
		
		\node[state,fill=black,minimum size=3pt] (o9) at (12, 2) {};
		\node[state,fill=black,minimum size=3pt] (u9) at (12, 0) {};
		\node[state,fill=black,minimum size=3pt] (o10) at (13, 2) {};
		\node[state,fill=black,minimum size=3pt] (o11) at (14, 2) {};
		\node[state,fill=black,minimum size=3pt] (u10) at (13, 0) {};
		\node[state,fill=black,minimum size=3pt] (u11) at (14, 0) {};
\draw (u8) edge node[above,fill=gray!20] {$001$} (u9);	
\draw (o8) edge node[below,fill=gray!20] {$001$} (o9);	
\draw (o11) edge node[left,fill=gray!20] {$0^{k-1}1$} (u11);	
\draw (u8) edge (u9);
\draw (u10) edge (u9);
\draw (u10) edge (u11);	
\draw (o8) edge (o9);	
\draw (o9) edge (o10);	
\draw (o10) edge (o11);	
\draw (u11) edge (o11);

		\draw (o1) edge[dashed] (o2);
		\draw (o3) edge[dashed] (o4);
		\draw (o5) edge[dashed] (o6);
		\draw (o7) edge[dashed] (o8);
		
		\draw (u1) edge[dashed] (u2);
		\draw (u3) edge[dashed] (u4);
		\draw (u5) edge[dashed] (u6);
		\draw (u7) edge[dashed] (u8);

		\node[left of=0] {$0$};	
		\node[above of=o1] {$1$};	
		\node[below of=u1] {$6 k - 2$};
		\node[above of=o2] {$k - 1$};
		\node[below of=u2] {$5k$};
		\node[above of=o3] {$k$};	
		\node[below of=u3] {$5k-1$};
		\node[above of=o6] {$2k - 1$};	
		\node[below of=u6] {$4k$};
		\node[above of=o7] {$2k$};	
		\node[below of=u7] {$4k-1$};
		\node[above of=o4] {$\frac{3}{2}\cdot k - 1$};	
		\node[below of=u4] {$\frac{9}{2}\cdot k$};	
		\node[above of=o5] {$\frac{3}{2}\cdot k$};	
		\node[below of=u5] {$\frac{9}{2}\cdot k - 1$};
		\node[above of=o8] {$3k - 4$};
		\node[above of=o9] {$3k - 3$};
		\node[below of=u8] {$3k + 3$};
		\node[below of=u9] {$3k + 2$};	
		\node[above of=o11] {$3k-1$};
		\node[below of=u11] {$3k$};

	\end{tikzpicture}
}
		
	\caption{Cycle with $6\cdot k-1$ vertices and $\LCM(L_{\mathcal{G}})=3k$ with a cop winning strategy from the start vertex~$0$ where~$k=2^m$ and~$m\geq 2$. 
	Edges not drawn (depicted by dots) or edges without an explicit label are 1-edges; for all other edges, $\tau(e)$ is explicitly noted (with gray background).}
	\label{fig:k2igeneral}
\end{figure}
\begin{lemma}\label{lem:famCopWin2i}
	For every $k = 2^m$ with~$m\geq 2$, there exists an edge periodic cycle $\mathcal{G}=(V, E, \tau)$ with $\LCM(L_{\mathcal{G}}) = 3\cdot \max(L_{\mathcal{G}}) = 3\cdot k$, and $n = 6\cdot k-1$ vertices that is cop-winning.
\end{lemma}
\begin{proof}
Consider the edge periodic cycle $\mathcal{G}_k=(V, E, \tau)$ depicted in Figure~\ref{fig:k2igeneral} with $|V| = 6k-1$. 
	This graph admits a cop winning strategy if the cop picks the highlighted vertex with index~0 as her start vertex.
	The vertices are indexed by positive numbers indicating their clockwise distance to the start vertex of the cop.
	Let the cop pick vertex $0$. 
	We show that if the robber picks vertex $3k-1$, then the cop has a winning strategy by continuously running clockwise, starting in time step zero.
	Since for each~$j$, starting from vertex~$0$, the label of the~$j$-th edge clockwise is equal to the label of the~$j$-th edge counterclockwise, the same applies running counterclockwise if the robber picks vertex $3k$. 
	Note that these two positions represent extrema and being able to catch the robber at these vertices implies being able to catch him at all vertices in the graph.
	Suppose that the robber picks vertex $3k-1$.
	Since~$k=2^m$ for some~$m\geq 2$,~$\frac{3}{2}\cdot k$ and~$\frac{9}{2}\cdot k$ are divisible by~$3$.
	Hence for each~$j\in[1,6k-3]$, the cop can traverse the edge~$\{j, j+1\}$ in time step~$j$ and, thus, reach the vertex~$5k-1$ in time step~$5k-2$. 
	We show that, starting from vertex~$3k-1$ and running clockwise, the robber cannot reach vertex~$5k$ prior than time step~$5k-1$.
	This then implies, that the cop catches the robber after at most~$5k-2$ time steps.
	Note that the first time the robber can traverse the edge~$\{3k-1, 3k\}$ is at time step~$k-1$.
	Hence, the robber reaches the vertex~$3k+2$ not prior than time step~$k+1$.
	Since~$k$ is not divisible by~$3$, the robber cannot traverse the edge~$\{3k+2, 3k+3\}$ in time step~$k+2$.
	Thus, the robber cannot reach the vertex~$4k-1$ prior than time step~$2k$ and consequently, he cannot traverse the edge~$\{4k-1, 4k\}$ prior than time step~$3k-1$.
	Hence, the robber reaches the vertex~$\frac{9}{2}k-1$ not prior than time step~$\frac{7}{2}k-2$.
	Since~$k$ is not divisible by~$3$, the robber cannot traverse the edge~$\{\frac{9}{2}k-1, \frac{9}{2}k\}$ in time step~$\frac{7}{2}k-1$.
	Thus, the robber cannot reach the vertex~$5k-1$ prior than time step~$4k$ and consequently, he cannot traverse the edge~$\{5k-1, 5k\}$ prior than time step~$5k-1$.
	Hence, the statement holds.	
	A concrete example for $k=4$ can be found in the appendix.
\end{proof}

\section{Complexity upper bounds}
\label{sec:upperbounds}
The main result of this section is that the \CNR problem for \emph{general} edge periodic graphs can be solved in polynomial space. Note that two-player games may take exponentially many turns and hence containment in \PSPACE is not obvious. In our case, already the period of graphs on which the game is played is exponential in general. 
This prohibits a standard incremental \PSPACE algorithm approach. 
We show that despite the potentially exponential period of the sequence of graphs $\mathcal{G}(t)$ we show that we can determine whether the cop has a winning strategy by sweeping through the configuration space in a way that we only consider polynomially many vertices in each step. The fact that we only consider one cop and one robber is here crucial for the polynomial bound.
\begin{theorem}
	\label{thm:PSPACE-Membership}
 \CNR for edge periodic graphs is contained in~\PSPACE.
\end{theorem}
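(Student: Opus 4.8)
The plan is to regard the game as a reachability game on its configuration graph and to exploit that, although this graph has exponentially many vertices, every \emph{single} configuration can be stored in polynomial space. A configuration is a triple $(c,r,a)$ consisting of the cop's vertex $c$, the robber's vertex $r$, and the phase $a := t\bmod P$, where $P := \LCM(L_{\mathcal G})$; since $P$ is the least common multiple of numbers each bounded by the input size, $\log P = \Oh(N)$ for $N$ the input size, so a configuration occupies polynomial space. One time step amounts to a cop move along an edge of $\mathcal G(a)$ or staying, then a robber move along an edge of $\mathcal G(a)$ or staying, then incrementing $a$ modulo $P$; the cop wins once, right after her move, she is on the robber's vertex. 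Thus the cop wins the whole game iff there is a start vertex $c_0$ of $\mathcal G(0)$ such that, for every other start vertex $r_0$ of $\mathcal G(0)$, the configuration $(c_0,r_0,0)$ is cop-winning (the case $|V(\mathcal G(0))|\le 1$ being degenerate).

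The first real step is to isolate the controllable-predecessor operator for a fixed phase. For $a\in\{0,\dots,P-1\}$ and $Y\subseteq V\times V$, let $\mathrm{CPre}_a(Y)$ be the set of pairs $(c,r)$ with $c\neq r$ for which the cop has a move $c'$ available in $\mathcal G(a)$ (possibly $c'=c$) such that either $c'=r$, or $(c',r')\in Y$ for every robber move $r'$ available in $\mathcal G(a)$ with $r'\neq c'$. Given $a$ and $Y$, the set $\mathrm{CPre}_a(Y)$ is computable in polynomial time: to test one pair it suffices to know, for the $\Oh(|V|)$ edges $e$ incident to $c$ or $r$, whether $\modid{\tau(e)}{a}=1$. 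I would then argue that the set $W_0\subseteq V\times V$ of cop-winning phase-$0$ configurations is exactly the least fixpoint of the monotone operator $\Psi := \mathrm{CPre}_0\circ\mathrm{CPre}_1\circ\cdots\circ\mathrm{CPre}_{P-1}$ on $2^{V\times V}$: unrolling the definitions, $\Psi^k(\emptyset)$ is precisely the set of phase-$0$ configurations from which the cop can force a catch within $k$ full periods, and the union over all $k$ is $W_0$. Since $2^{V\times V}$ has height at most $|V|^2$ and $\Psi$ is monotone, this gives $W_0=\Psi^{|V|^2}(\emptyset)$.

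The resulting algorithm maintains only a single subset $Z\subseteq V\times V$ together with a counter ranging over $\{0,\dots,P-1\}$ and a counter ranging over $\{1,\dots,|V|^2\}$; all of this is polynomial space because $\log P$ is polynomial. Starting from $Z=\emptyset$ it performs $|V|^2$ passes, each pass replacing $Z$ successively by $\mathrm{CPre}_{P-1}(Z)$, then $\mathrm{CPre}_{P-2}(Z)$, \dots, then $\mathrm{CPre}_{0}(Z)$, so that after the final pass $Z=W_0$. It then accepts iff there is a start vertex $c_0$ of $\mathcal G(0)$ with $(c_0,r_0)\in W_0$ for all other start vertices $r_0$ of $\mathcal G(0)$. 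Each operator application uses polynomial time and space, there are $P\cdot|V|^2$ of them, so the procedure runs in exponential time but polynomial space; this suffices for $\PSPACE$-membership, and since $\PSPACE$ is closed under complement it also settles robber-winning.

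The step I expect to be the main obstacle is the correctness argument for processing the game phase by phase: one must verify that the exponential-size game on the configuration graph is faithfully captured by the polynomial-size lattice $2^{V\times V}$ together with the cyclic composition $\Psi$, i.e.\ that $W_0$ is the \emph{least} (not the greatest) fixpoint of $\Psi$ and that $|V|^2$ passes really suffice. This forces one to be careful with the cop-before-robber move order, with the ``catch happens immediately after the cop's move'' convention baked into $\mathrm{CPre}_a$ (in particular the $c'=r$ disjunct), and with the quantifier alternation over the two start vertices. The remaining ingredients---the bound $\log P = \Oh(N)$ and the polynomial-time computability of each $\mathrm{CPre}_a$---are routine.
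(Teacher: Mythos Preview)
Your proposal is correct and follows essentially the same approach as the paper: both compute the cop's attractor set by sweeping backward through the $|V|^2\cdot\LCM(L_{\mathcal G})$ levels of the (implicitly) unrolled configuration graph while keeping only one $V\times V$-sized slice plus counters in memory. The only cosmetic difference is that the paper first isolates the bound ``cop catches within $n^2\cdot\LCM(L_{\mathcal G})$ rounds'' as a separate pigeonhole lemma, whereas you obtain the same bound implicitly from the height of the lattice $2^{V\times V}$ and the monotonicity of $\Psi$.
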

For general edge periodic graphs, the \CNR problem was reduced in~\cite{erlebach2020game} to a variant of the \textsc{And-Or Graph Reachability} problem~\cite{DBLP:journals/tcs/AlfaroHK07} via an exponential time reduction. The \textsc{And-Or Graph Reachability} problem is a two player game where players move a token in an \textsc{And-Or} graph from a source to a target. An \textsc{And-Or} graph is a graph $G=(V_\wedge \cup V_\vee,E)$ where the set of vertices is partitioned into a set of \textsc{And} vertices $V_\wedge$ and a set of \textsc{Or} vertices $V_\vee$. If the token is on an \textsc{Or} vertex, then player 0 moves the token, otherwise player 1 moves the token. Player 0 wins if the token finally reaches the target. The \textsc{And-Or Graph Reachability} problem is known to be \PTIME-complete~\cite{DBLP:journals/jcss/Immerman81}. 
The reduction in~\cite{erlebach2020game} unrolls the edge periodic graph into its configuration graph.
Considering the \CNR problem, we define for an edge periodic graph $\mathcal{G} = (V, E, \tau)$, the \emph{configuration graph} $\mathcal{C}_\mathcal{G} = ( V_{\mathcal{C}_\mathcal{G}}, E_{\mathcal{C}_\mathcal{G}})$ of $\mathcal{G}$ 
with node set $V \times V \times \{c, r\} \times [\LCM(L_\mathcal{G})]$.
Here, a node $(u_c, u_r, s, t)$ denotes in the temporal snapshot graph $\mathcal{G}(t)$, that the cop is on vertex $u_c$, the robber on vertex $u_r$, and the cop moves next if $s=c$, otherwise the robber moves next. 
A node $(u_c, u_r, s, t)$ is connected with a directed edge to some node $(u_c', u_r', s', (t+1) \mod \LCM(L_\mathcal{G}))$, if $s = c$, $s'=r$, $u_r' = u_r$, and $u_c' = u_c$ or $u_c$ and $u_c'$ are connected by an edge which is present in time step $t$. If $s=r$ the same holds for $c$ and $r$ interchanged. 
Hence, the configuration graph $\mathcal{C}_\mathcal{G}$ is of size $\mathcal{O}(|V|^2\cdot \LCM(L_\mathcal{G}))$.

The configuration graph $\mathcal{C}_\mathcal{G}$ is then turned into an \textsc{And-Or} graph by declaring nodes $(u_c, u_r, s, t)$ with $s=c$ as \textsc{Or} vertices and nodes with $s=r$ as \textsc{And} vertices. 
As $\LCM(L_\mathcal{G})$ can be of size exponential in $|L_\mathcal{G}|$, it follows from~\cite{erlebach2020game} that \CNR is contained in \EXPTIME. 
Note that the exponential blow-up comes from the very deterministic process of unrolling the edge periodic graph into a TVG with global periodicity $\LCM(L_\mathcal{G})$. 
Representing this TVG in a framework which only allows for global periodicity, such as the setting in~\cite{wehmuth2015unifying}, would require a representation of the TVG in size $\mathcal{O}(\LCM(L_\mathcal{G}))$ and allow for a polynomial time reduction to the \textsc{And-Or Graph Reachability} problem yielding membership in \PTIME for \CNR in this setting.

In the following, we show how use the structural properties of the configuration graph $\mathcal{C}_\mathcal{G}$ to solve the \textsc{And-Or Graph Reachability} for $\mathcal{C}_\mathcal{G}$ in polynomial space. 
We begin with giving an upper bound on the length of a shortest chase in $\mathcal{G}$.
\begin{lemma}
	\label{lem:max-chase}
	Let $\mathcal{G}=(V, E, \tau)$ be an edge periodic graph. If $\mathcal{G}$ is cop-winning, then the robber can be caught within $n^2 \cdot \LCM(L_\mathcal{G})$ rounds.
\end{lemma}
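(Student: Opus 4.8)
The plan is to pass to the natural finite \emph{configuration game} associated with $\mathcal{G}$ and to observe that a reachability game on a state space of size at most $n^2\cdot\LCM(L_\mathcal{G})$ (where $n:=|V|$) is won, if at all, within that many moves. A \emph{configuration} is a triple $(c,r,\theta)$ with $c,r\in V$ and $\theta\in\{0,\dots,\LCM(L_\mathcal{G})-1\}$, to be read as: ``it is a time step $t$ with $t\equiv\theta\pmod{\LCM(L_\mathcal{G})}$, the cop is on $c$, the robber is on $r$, and it is the cop's turn''. Since every edge period divides $\LCM(L_\mathcal{G})$, the value $\modid{\tau(e)}{t}$ depends only on $\theta$, so the sets of legal cop moves and robber moves from $(c,r,\theta)$ and the successor clock $(\theta{+}1)\bmod\LCM(L_\mathcal{G})$ are determined by the configuration alone. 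We may assume $c\neq r$ (otherwise the robber is already caught), so there are at most $N:=n^2\cdot\LCM(L_\mathcal{G})$ relevant configurations.

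Next I would carry out the standard attractor computation for the cop. Put $W_0:=\emptyset$, and let $W_{i+1}$ be $W_i$ together with every configuration $(c,r,\theta)$ from which the cop has a legal move $c\to c'$ that either lands on the robber ($c'=r$) or, for \emph{every} legal robber answer $r\to r'$, reaches a configuration $(c',r',(\theta{+}1)\bmod\LCM(L_\mathcal{G}))\in W_i$. A routine induction on $i$ shows that $W_i$ is precisely the set of configurations from which the cop can force a catch within $i$ rounds. The underlying operator is monotone on the power set of an $N$-element set, and $W_0\subseteq W_1\subseteq\cdots$ is strictly increasing until it stabilizes; since one can add at most $N$ elements starting from $\emptyset$, we get $W_N=W_{N+1}=\cdots=:W_\infty$.

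The remaining point is that $W_\infty$ equals the cop's winning region. The inclusion ``$W_\infty\subseteq$ cop-winning configurations'' is exactly the induction above. Conversely, if $(c,r,\theta)\notin W_\infty$, then no cop move lands on $r$ and, for every cop move $c\to c'$, some robber answer $r\to r'$ keeps the game outside $W_\infty$ --- otherwise $(c,r,\theta)$ would have entered some $W_{i+1}$. Hence from such a configuration the robber can stay in the complement of $W_\infty$ indefinitely and is never caught, so $(c,r,\theta)$ is robber-winning. Finally, if $\mathcal{G}$ is cop-winning there is a cop start vertex $c_0$ with $(c_0,r_0,0)$ cop-winning for every robber start vertex $r_0$; thus $(c_0,r_0,0)\in W_\infty=W_N$, and by the meaning of $W_N$ the robber is caught within $N=n^2\cdot\LCM(L_\mathcal{G})$ rounds.

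I do not expect a real obstacle here: this is the textbook analysis of a finite reachability game. An equivalent route is to invoke memoryless determinacy of reachability games and note that a play following a positional winning cop strategy cannot revisit a configuration before the catch --- otherwise the robber could loop and evade forever --- which again caps the catch time by the number of configurations. The only things that need care are the bookkeeping of the intra-round alternation (cop first, then robber, clock advancing once per round) and checking the evasion direction, namely that the robber really can remain in the complement of $W_\infty$ at every step.
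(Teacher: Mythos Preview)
Your proof is correct and rests on the same core idea as the paper: pass to the finite configuration game over $V\times V\times[\LCM(L_\mathcal{G})]$ and bound the catch time by the number of states. The paper additionally carries a turn bit (states in $V\times V\times\{c,r\}\times[\LCM(L_\mathcal{G})]$, giving $2n^2\LCM(L_\mathcal{G})$ half-moves and hence $n^2\LCM(L_\mathcal{G})$ rounds) and extracts the bound differently: it takes a ``shortest'' cop-winning play $\pi$ and argues by pigeonhole that no configuration can repeat, excising any repeated segment to contradict minimality. That argument tacitly relies on positional strategies---exactly the ``equivalent route'' you sketch at the end---whereas your explicit attractor iteration $W_0\subseteq W_1\subseteq\cdots$ stabilizing in at most $N$ steps is self-contained and handles the adversarial robber directly. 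The trade-off is a slightly longer write-up for a cleaner justification; the bound and the underlying configuration-game reduction are the same.
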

\begin{proof}
	Consider the configuration graph $\mathcal{C}_\mathcal{G} =( V_{\mathcal{C}_\mathcal{G}}, E_{\mathcal{C}_\mathcal{G}})$ of $\mathcal{G}$.
	Note that two configurations of a \CNR game on the same temporal snapshot graph with identical positions of the cop and robber in different time steps $t$ and $t'$ which differ by a multiple of $\LCM(L_\mathcal{G})$ are indistinguishable as configurations of the game. 
	Hence, the size of $\mathcal{C}_\mathcal{G}$ is bounded by $2 n^2 \cdot \LCM(L_\mathcal{G})$.
	
	Now let $\pi$ be a shortest sequence of configurations for the start vertices $v_c$ and $v_r$ such that $\pi$ ends with a cop-winning configuration. If $|\pi| > 2 n^2 \cdot \LCM(L_\mathcal{G})$, then $\pi$ contains two indistinguishable configurations $\pi_{i_1}$ and $\pi_{i_2}$. Clearly, the cop made no progress towards capturing the robber in the sequence $\pi_{i_1}, \dots, \pi_{i_2 -1}$. As $\pi_{i_1}$ and $\pi_{i_2}$ are indistinguishable, removing the sequence  $\pi_{i_1}, \dots, \pi_{i_2 -1}$ from $\pi$ yields a shorter sequence of configurations ending in a cop-winning configurations violating the assumed minimality of $\pi$. \lncsqed
\end{proof}
\begin{proof}[Proof of~\Cref{thm:PSPACE-Membership}]
	We will follow the ideas from~\cite{erlebach2020game} of reducing the \CNR problem to a variant of the \textsc{And-Or Graph Reachability} problem. 
	Therefore, we will need the notion of \emph{attractors} in an \textsc{And-Or} graph.
	Let $G=(V, E)$ be an \textsc{And-Or} graph with $V = V_\wedge \cup V_\vee$.
	Instead of considering a single target, we consider a set $T$ of targets and say that player 0 wins if the token finally reaches any state in $T$.
	Let $s\in V$ be the start vertex. Intuitively, the set of attractors of $G$ is the set of vertices from which player~0 can win the game. More formally, we inductively define the set of attractors $\att$ of $T$ as:
	\begin{align*}
		\att^0 &= T,\\
		\att^{i+1} &= \att_i \cup 
		\{v \in V_\wedge \mid \forall \{u, v\} \in E \colon u \in \att^i\}\ \cup\\
		& \qquad\quad\quad \ \{v \in V_\vee \mid \exists \{u, v\} \in E \colon u \in \att^i\},\\
		\att &= \bigcup_{i\geq 0} \att^i.
	\end{align*}
	
	Let $\mathcal{G} = (V, E, \tau)$ be the input edge periodic graph and let $\mathcal{C}_\mathcal{G} =( V_{\mathcal{C}_\mathcal{G}}, E_{\mathcal{C}_\mathcal{G}})$ be the configuration graph of $\mathcal{G}$. We will also identify $\mathcal{C}_\mathcal{G}$ as an \textsc{And-Or} graph by declaring nodes $(u_c, u_r, s, t)$ of $\mathcal{C}_\mathcal{G}$ with $s=c$ as \textsc{Or} vertices and nodes with $s=r$ as \textsc{And} vertices. 
	We define the set of nodes $ T = \{(u_c, u_r, s, t) \in V_{\mathcal{C}_\mathcal{G}} \mid u_c = u_r\}$ as the target set of the \textsc{And-Or} graph~$\mathcal{C}_\mathcal{G}$.
	We will now prove that the ideas from~\cite{erlebach2020game} of solving the \CNR game by checking if (i) there is some vertex $u_c$ such that for all vertices $u_r\in V$, the node $(u_c, u_r, c, 0)\in V_{\mathcal{C}_\mathcal{G}}$ is an attractor in the \textsc{And-Or} graph $\mathcal{C}_\mathcal{G}$, can be implemented in polynomial space. 
	Note that the set of attractors in $\mathcal{C}_\mathcal{G}$ corresponds to the set of configurations from which the cop has a winning strategy. 
	We use~\Cref{lem:max-chase} to unroll the configuration graph in order to obtain a leveled directed acyclic graph (DAG) which has width $n^2$ in each level and through which we can sweep level by level in order to verify property (i). 
	
	By~\Cref{lem:max-chase} we know that in order to verify property (i) it is sufficient to consider paths of length at most $2 n^2 \cdot \LCM(L_\mathcal{G})$ in $\mathcal{C}_\mathcal{G}$
	(the factor 2 is due to the alternation of players).  As the configuration graph is cyclic (due to modulo counting by $\LCM(L_\mathcal{G})$) we 
	unroll the graph $n^2$ times to allow for different time steps up to $n^2\cdot \LCM(L_\mathcal{G})$. The obtain DAG is big enough to contain any shortest chase starting in any time step $t \leq \LCM(L_\mathcal{G})$.
	The so obtained graph $\mathcal{C}'_\mathcal{G}$ consists of the node set $V_{\mathcal{C}'_\mathcal{G}} = V \times V \times \{c, r\} \times [n^2\cdot\LCM(L_\mathcal{G})]$ and the edge set $E_{\mathcal{C}'_\mathcal{G}}$ extending $E_{\mathcal{C}_\mathcal{G}}$ as $((u_c, u_r, s, t),$ $(u_c', u_r', s', t'))\in E_{\mathcal{C}'_\mathcal{G}}$ if and only if $t' \in \{t, t+1\}$, $((u_c, u_r, s, t\mod \LCM(L_\mathcal{G})),$ $(u_c', u_r', s', t'\mod \LCM(L_\mathcal{G})) \in E_{\mathcal{C}_\mathcal{G}}$ and $t, t' \leq n^2\cdot \LCM(L_\mathcal{G})-1$. 
	Verifying property (i) then corresponds to a reachability game in the \textsc{And-Or} graph associated with $\mathcal{C}'_\mathcal{G}$ with target set $T = \{(u_c, u_r, s, t) \in V_{\mathcal{C}'_\mathcal{G}} \mid u_c = u_r\}$ which can be solved using the notion of {attractors}.
%
	Note that in $\mathcal{C}'_\mathcal{G}$ only nodes with identical time steps or $t$ and $t+1$ are connected. Hence, in order to compute which nodes with time step $t$ belong to the set of attractors, we need to only know which nodes with time step $t+1$ are attractors. Since $\mathcal{C}'_\mathcal{G}$ is a DAG we can start in the level with $t = n^2\cdot\LCM(L_\mathcal{G})-1$ of  $\mathcal{C}'_\mathcal{G}$. 
	\begin{align*}
		\noalign{\noindent$\ \Atr^{n^2\cdot \LCM(L_\mathcal{G})-1}_r := \{(u_c, u_r, r, n^2\cdot \LCM(L_\mathcal{G})-1) \mid u_c = u_r\}$,}
		\Atr^{t}_r := &\{(u_c, u_r, r, t) \mid  \forall ((u_c, u_r, r, t), (u_c, u_r', c, t+1)) \in E_{\mathcal{C}'_\mathcal{G}}\colon (u_c, u_r', c, t+1)\\
		& \in \Atr_c^{t+1} \} 
		~\cup \{(u_c, u_r, r, t) \mid u_c = u_r\},
		\text{for } n^2\cdot \LCM(L_\mathcal{G})-2\geq t \geq 0,\\
		\Atr^{t}_c :=  &\{(u_c, u_r, c, t) \mid  \exists ((u_c, u_r, c, t), (u_c', u_r, r, t)) \in E_{\mathcal{C}'_\mathcal{G}}\colon (u_c', u_r, r, t) \in \Atr_r^{t} \} \\&
		~\cup \{(u_c, u_r, c, t) \mid u_c = u_r\},
		\text{for } n^2\cdot \LCM(L_\mathcal{G})-1 \geq t \geq 0.
	\end{align*}
	For each level $n^2\cdot \LCM(L_\mathcal{G})-1 \geq t \geq 0$ we only need to keep the last level ($t+1$ if existent)\footnote{Note that we can easily compute the snapshot $\mathcal{G}(n^2\cdot \LCM(L_\mathcal{G})) = \mathcal{G}(0)$ by drawing all edges with $\modid{\tau(e)}{0} = 1$; and from $\mathcal{G}(t)$ for some time step $t$, the snapshot $\mathcal{G}(t-1)$ by shifting the pointer in each $\tau(e)$ one step to the left. Therefore, we can compute from each level $t+1$ of $\mathcal{C}'_\mathcal{G}$ the level $t$ in polynomial time and space.}	
	of $\mathcal{C}'_\mathcal{G}$ in memory in order to compute the sets $\Atr^{t}_c$ and $\Atr^{t}_r$ of nodes $(u_c, u_r, s, t)$ in $\mathcal{C'}_\mathcal{G}$ from which the cop has a winning strategy where $s$ equals the sub-script. 
	Note that $\bigcup_{0 \leq t \leq n^2\cdot \LCM(L_\mathcal{G})-1} \Atr^{t}_c \cup \Atr^{t}_r = \Atr$. In order to verify property (i) we only need to keep the current and latest sets $\Atr^{t}_c$, $\Atr^{t}_r$, $\Atr^{t+1}_c$, $\Atr^{t+1}_r$ in memory yielding a polynomial space algorithm. 
\lncsqed
\end{proof}
\section{Discussion}
\label{sec:discussion}
While we improved the currently known upper bound for the \CNR problem on edge periodic graphs from \EXPTIME to \PSPACE and improved the lower bounds, of being \NP-hard, to include also the very restrictive classes of directed and undirected edge periodic cycles, a gap in the complexity of \CNR remains. 
It is worth noticing that on one side the chosen representation of edge periodic graphs is quite  compact, as a natural proof for a cop-winning strategy might be of exponential length in the input size, since the periodicity of the whole graph is the least common multiple of the periodicity of each edge, which prevents the use of a simple guess \& check approach for \NP-membership. 
On the other side, the chosen representation is still exponentially larger than the representation by on-line programs used in~\cite{sutner1988motion} where \PSPACE-completeness for the reachability problem on a related but different class of periodic TVGs was obtained.

If we consider
 \emph{directed} edge periodic \emph{cycles}, then determining whether the given cycle is cop-winning boils down to deterministically simulating the chase starting from a (guessed) cop vertex and time step, as the optimal strategies for the cop and robber are both to keep running whenever possible (without bumping into the cop).
For the robber the optimal start vertex is directly behind the cop.
Since $\LCM(L_\mathcal{G})$ can be exponentially large in the size of~$\mathcal{G}$ the only known upper bound on the number of steps in the simulation of the chase starting in time step~$t$ is exponential in the size of $\mathcal{G}$, while the chase does not reveal any complexity.
%
 The simulation could even be performed by a log-space Turing-Machine being equipped with a clock which allows for modulo queries of logarithmic size. To better understand the precise complexity of \CNR on directed edge periodic cycles, the theoretical analysis of potential families of cycles with shortest cop-winning strategies of exponential size would be of great interest and might indicate the necessity for a new complexity class consisting of simple simulation problems with exponential duration time.

\bibliographystyle{plain}
\bibliography{bib-slim}

\clearpage
\appendix
\section{Details on~\Cref{lem:famCopWin}}
We explicitly give the edge periodic cycles for $k=2$, $k=3$, and $k=4$ in the proof of~\Cref{lem:famCopWin}. 
For $k=2$ and $k=3$ the chase of the cop will be shorter than described in Table~\ref{tab:4k-1} and for $k\geq4$ the chase will be exactly as described in general in Table~\ref{tab:4k-1}.
The edge periodic cycle for $k=2$ is depicted in Figure~\ref{fig:k2} and the chase is described in Table~\ref{tab:k2}. 
For $k=3$ the edge periodic cycle is depicted in Figure~\ref{fig:k3} and the chase is described in Table~\ref{tab:k3}.
Finally, for $k=4$, the edge periodic cycle is depicted in Figure~\ref{fig:k4} and the explicit chase is described in Table~\ref{tab:k4}. Note that Table~\ref{tab:k4} is identical to Table~\ref{tab:4k-1} if we set $k=4$ in Table~\ref{tab:4k-1}.

%
%
\begin{figure}[ht]
	\centering
	\begin{tikzpicture}[node distance=15pt,,yscale=0.9, state/.style={circle, draw,minimum size=3pt, initial text = }]
		\node[state,fill=red,minimum size=3pt] (0) at (5, 2) {$0$};
		\node[state] (1) at (7, 2) {$1$};
		\node[state] (2) at (8, 0.5) {$2$};
		\node[state] (3) at (6, 0) {$3$};
		\node[state] (4) at (4, 0) {$4$};
		\node[state] (5) at (2, 0.5) {$5$};
		\node[state] (6) at (3, 2) {$6$};
		\path (0) edge node[above,fill=gray!20] {$10$} (1)
		(1) edge (2) 
		(2) edge (3)
		(3) edge node[above,fill=gray!20] {$01$} (4)
		(4) edge node[above,fill=gray!20] {$10$} (5)
		(5) edge (6)
		(6) edge (0)
		; 
	\end{tikzpicture}
	\caption{Edge periodic cycle for the case $k=2$ in~\Cref{lem:famCopWin} with $4\cdot k-1 = 7$ vertices and $\LCM(L_{\mathcal{G}})=2$ with a cop winning strategy from the start vertex marked in red. Edges without edge label are 1-edges; for all other edges, $\tau(e)$ is explicitly noted (with gray background).}
	\label{fig:k2}
\end{figure}
\begin{table}[ht]
	\centering
	\begin{tabular}{r || r | r}
		time step & pos.~cop & pos.~robber\\
		\hline
		$s$ & 0 & 3\\
		0 & 1 & 3\\
		1 & 2 & 4\\
		2 & 3 & 5\\
		3 & 4 & 6\\
		4 & 5 & 0\\
		5 & 6 & 0\\
		6 & 0 & $\skull$\\
		\multicolumn{3}{l}{\ } \\
		\multicolumn{3}{l}{\ } \\
	\end{tabular}
	\hfil
	\begin{tabular}{r || r | r}
		time step & pos.~cop & pos.~robber\\
		\hline
		$s$ & 0 & 4\\
		0 & 6 & 4\\
		1 & 5 & 3\\
		2 & 4 & 2\\
		3 & 3 & 1\\
		4 & 2 & 0\\
		5 & 1 & 6\\
		6 & 0 & 5\\
		7 & 6 & 5\\
		8 & 5 &	$\skull$
	\end{tabular}
	\caption{Time steps with corresponding positions of cop and robber in the edge periodic cycle depicted in Figure~\ref{fig:k2}. All positions are \emph{after} moving in this time step. The time step $s$ denotes the start configuration. Recall that the cop moves first.}
	\label{tab:k2}
\end{table}
\clearpage
%
%
%
\begin{figure}[ht]
	\centering
	\begin{tikzpicture}[node distance=15pt,,yscale=0.9, state/.style={circle, draw,minimum size=20pt, initial text = }]
		\node[state,fill=red] (0) at (5, 2) {$0$};
		\node[state] (1) at (7, 2) {$1$};
		\node[state] (2) at (9, 2) {$2$};
		\node[state] (3) at (10, 1) {$3$};
		\node[state] (4) at (9, 0) {$4$};
		\node[state] (5) at (6, 0) {$5$};
		\node[state] (6) at (4, 0) {$6$};
		\node[state] (7) at (2, 0) {$7$};
		\node[state] (8) at (0, 0) {$8$};
		\node[state] (9) at (0, 2) {$9$};
		\node[state] (10) at (2, 2) {$10$};

		\path (0) edge node[above,fill=gray!20] {$100$} (1)
		(1) edge (2) 
		(2) edge (3)
		(3) edge (4)
		(4) edge (5)
		(5) edge node[above,fill=gray!20] {$001$} (6)
		(6) edge (7)
		(7) edge node[above,fill=gray!20] {$100$} (8)
		(8) edge (9)
		(9) edge (10)
		(10) edge (0)
		; 
	\end{tikzpicture}
	\caption{Edge periodic cycle for the case $k=3$ in~\Cref{lem:famCopWin} with $4\cdot k-1 = 11$ vertices and $\LCM(L_{\mathcal{G}})=3$ with a cop winning strategy from the start vertex marked in red. Edges without edge label are 1-edges; for all other edges, $\tau(e)$ is explicitly noted (with gray background).}
	\label{fig:k3}
\end{figure}
\begin{table}[ht]
	\centering
	\begin{tabular}{r || r | r}
		time step & pos.~cop & pos.~robber\\
		\hline
		$s$ & 0 & 5\\
		0 & 1 & 5\\
		1 & 2 & 5\\
		2 & 3 & 6\\
		3 & 4 & 7\\
		4 & 5 & 7\\
		5 & 6 & 7\\
		6 & 7 & $\skull$\\
		\multicolumn{3}{l}{\ } \\
		\multicolumn{3}{l}{\ } \\
		\multicolumn{3}{l}{\ } \\
	\end{tabular}
	\hfil
	\begin{tabular}{r || r | r}
		time step & pos.~cop & pos.~robber\\
		\hline
		$s$ & 0 & 6\\
		0 & 10 & 6\\
		1 & 9 & 6\\
		2 & 8 & 5\\
		3 & 7 & 4\\
		4 & 6 & 3\\
		5 & 5 & 2\\
		6 & 4 & 1\\
		7 & 3 & 1\\
		8 & 2 & 1\\
		9 & 1 & $\skull$
	\end{tabular}
	\caption{Time steps with corresponding positions of cop and robber in the edge periodic cycle depicted in Figure~\ref{fig:k3}. All positions are \emph{after} moving in this time step. The time step $s$ denotes the start configuration. Recall that the cop moves first.}
	\label{tab:k3}
\end{table}
%
%
%
%
\begin{figure}[ht]
	\centering
	\begin{tikzpicture}[node distance=15pt,,yscale=0.9, state/.style={circle, draw,minimum size=20pt, initial text = }]
		\node[state,fill=red] (0) at (5, 2) {$0$};
		\node[state] (1) at (7, 2) {$1$};
		\node[state] (2) at (8.5, 2) {$2$};
		\node[state] (3) at (10, 2) {$3$};
		\node[state] (4) at (11, 1) {$4$};
		\node[state] (5) at (10, 0) {$5$};
		\node[state] (6) at (8.5, 0) {$6$};
		\node[state] (7) at (6, 0) {$7$};
		\node[state] (8) at (4, 0) {$8$};
		\node[state] (9) at (2.5, 0) {$9$};
		\node[state] (10) at (1, 0) {$10$};
		\node[state] (11) at (-1, 1) {$11$};
		\node[state] (12) at (0.5, 2) {$12$};
		\node[state] (13) at (2, 2) {$13$};
		\node[state] (14) at (3.5, 2) {$14$};
		
		\path (0) edge node[above,fill=gray!20] {$1000$} (1)
		(1) edge (2) 
		(2) edge (3)
		(3) edge (4)
		(4) edge (5)
		(5) edge (6)
		(6) edge (7)
		(7) edge node[above,fill=gray!20] {$0001$} (8)
		(8) edge (9)
		(9) edge (10)
		(10) edge node[above,fill=gray!20] {$1000$} (11)
		(11) edge (12)
		(12) edge (13)
		(13) edge (14)
		(14) edge (0)
		; 
	\end{tikzpicture}
	\caption{Edge periodic cycle for the case $k=4$ in~\Cref{lem:famCopWin} with $4\cdot k-1 = 15$ vertices and $\LCM(L_{\mathcal{G}})=4$ with a cop winning strategy from the start vertex marked in red. Edges without edge label are constant 1-edges; for all other edges, $\tau(e)$ is explicitly noted (with gray background).}
	\label{fig:k4}
\end{figure}
\begin{table}[ht]
	\centering
	\begin{tabular}{r || r | r}
		time step & pos.~cop & pos.~robber\\
		\hline
		$s$ & 0 & 7\\
		0 & 1 & 7\\
		1 & 2 & 7\\
		2 & 3 & 7\\
		3 & 4 & 8\\
		4 & 5 & 9\\
		5 & 6 & 10\\
		6 & 7 & 10\\
		7 & 8 & 10\\
		8 & 9 & 11\\
		9 & 10 & 12\\
		10 & 10 & 13\\
		11 & 10 & 14\\
		12 & 11 & 0\\
		13 & 12 & 0\\
		14 & 13 & 0\\
		15 & 14 & 0\\
		16 & 0 & $\skull$\\
		\multicolumn{3}{l}{\ } \\
		\multicolumn{3}{l}{\ } \\
		\multicolumn{3}{l}{\ } \\
		\multicolumn{3}{l}{\ } \\
	\end{tabular}
	\hfil
	\begin{tabular}{r || r | r}
		time step & pos.~cop & pos.~robber\\
		\hline
		$s$ & 0 & 8\\
		0 & 14 & 8\\
		1 & 13 & 8\\
		2 & 12 & 8\\
		3 & 11 & 7\\
		4 & 10 & 6\\
		5 & 9 & 5\\
		6 & 8 & 4\\
		7 & 7 & 3\\
		8 & 6 & 2\\
		9 & 5 & 1\\
		10 & 4 & 1\\
		11 & 3 & 1\\
		12 & 2 & 0\\
		13 & 1 & 14\\
		14 & 1 & 13\\
		15 & 1 & 12\\
		16 & 0 & 11\\
		17 & 14 & 11\\
		18 & 13 & 11\\
		19 & 12 & 11\\
		20 & 11 & $\skull$
	\end{tabular}
	\caption{Time steps with corresponding positions of cop and robber in the edge periodic cycle depicted in Figure~\ref{fig:k4}. Note that the position of the cop and robber are as described in Table~\ref{tab:4k-1} for the general case of $k\geq 4$. All positions are \emph{after} moving in this time step. The time step $s$ denotes the start configuration. Recall that the cop moves first.}
	\label{tab:k4}
\end{table}

 \clearpage
\section{Details on~\Cref{lem:famCopWin2i}}
We explicitly give the edge periodic cycle for $k=4$ in the proof of~\Cref{lem:famCopWin2i}. 
The edge periodic cycle is depicted in Figure~\ref{fig:k2i} and the explicit chase is described in Table~\ref{tab:k2i}.
 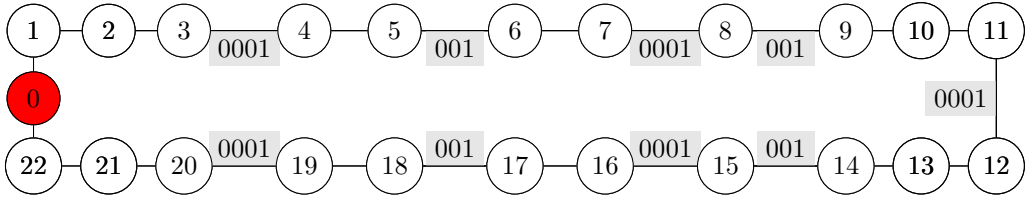
\begin{figure}[ht]
	\centering
	\begin{tikzpicture}[node distance=15pt,,yscale=0.9, state/.style={circle, draw,minimum size=20pt, initial text = }]
		\node[state,fill=red] (0) at (0, 0) {$0$};

		\node[state] (1) at (0, 1) {$1$};
		\node[state] (22) at (0, -1) {$22$};
		\node[state] (11) at (12.8, 1) {$11$};
		\node[state] (12) at (12.8, -1) {$12$};
		
		\node[state] (2) at (1, 1) {$2$};
		\node[state] (21) at (1, -1) {$21$};
		\node[state] (10) at (11.8, 1) {$10$};
		\node[state] (13) at (11.8, -1) {$13$};
		\foreach \i in {3,...,4}
		{
        \pgfmathtruncatemacro{\x}{23-\i};
		\node[state, fill=white] (\i) at (\i * 1.6 - 2.8, 1) {$\i$};
		\node[state, fill=white] (\x) at (\i * 1.6 - 2.8, -1) {$\x$};
		}
		
		\foreach \i in {5,...,6}
		{
        \pgfmathtruncatemacro{\x}{23-\i};
		\node[state, fill=white] (\i) at (\i * 1.6 - 3.2, 1) {$\i$};
		\node[state, fill=white] (\x) at (\i * 1.6 - 3.2, -1) {$\x$};
		}
		
		\foreach \i in {7,...,9}
		{
        \pgfmathtruncatemacro{\x}{23-\i};
		\node[state, fill=white] (\i) at (\i * 1.6 - 3.6, 1) {$\i$};
		\node[state, fill=white] (\x) at (\i * 1.6 - 3.6, -1) {$\x$};
		}
		
		\foreach \i in {1,...,21}
		{
        \pgfmathtruncatemacro{\x}{\i+1};
        \draw (\i) to (\x);
		}

		\foreach \i in {2,3}
		{
        \pgfmathtruncatemacro{\x}{\i*3};
        \pgfmathtruncatemacro{\y}{\x-1};
        \draw (\y) edge node[below,fill=gray!20] {$001$} (\x);
		}
		
		\foreach \i in {5,6}
		{
        \pgfmathtruncatemacro{\x}{\i*3};
        \pgfmathtruncatemacro{\y}{\x-1};
        \draw (\y) edge node[above,fill=gray!20] {$001$} (\x);
		}
		
        \draw (3) edge node[below,fill=gray!20] {$0001$} (4);
        \draw (7) edge node[below,fill=gray!20] {$0001$} (8);
        \draw (11) edge node[left,fill=gray!20] {$0001$} (12);
        \draw (15) edge node[above,fill=gray!20] {$0001$} (16);
        \draw (19) edge node[above,fill=gray!20] {$0001$} (20);
		\draw (0) to (1);
		\draw (0) to (22);

\node[state,fill=red] (0) at (0, 0) {$0$};

		\node[state] (1) at (0, 1) {$1$};
		\node[state] (22) at (0, -1) {$22$};
		\node[state] (11) at (12.8, 1) {$11$};
		\node[state] (12) at (12.8, -1) {$12$};
		
		\node[state] (2) at (1, 1) {$2$};
		\node[state] (21) at (1, -1) {$21$};
		\node[state] (10) at (11.8, 1) {$10$};
		\node[state] (13) at (11.8, -1) {$13$};
		\foreach \i in {3,...,4}
		{
        \pgfmathtruncatemacro{\x}{23-\i};
		\node[state, fill=white] (\i) at (\i * 1.6 - 2.8, 1) {$\i$};
		\node[state, fill=white] (\x) at (\i * 1.6 - 2.8, -1) {$\x$};
		}
		
		\foreach \i in {5,...,6}
		{
        \pgfmathtruncatemacro{\x}{23-\i};
		\node[state, fill=white] (\i) at (\i * 1.6 - 3.2, 1) {$\i$};
		\node[state, fill=white] (\x) at (\i * 1.6 - 3.2, -1) {$\x$};
		}
		
		\foreach \i in {7,...,9}
		{
        \pgfmathtruncatemacro{\x}{23-\i};
		\node[state, fill=white] (\i) at (\i * 1.6 - 3.6, 1) {$\i$};
		\node[state, fill=white] (\x) at (\i * 1.6 - 3.6, -1) {$\x$};
		}

		\foreach \i in {1,...,21}
		{
        \pgfmathtruncatemacro{\x}{\i+1};
        \draw (\i) to (\x);
		}
		
	\end{tikzpicture}
	\caption{Cycle with $23 = 6k-1$ vertices and $\LCM(L_{\mathcal{G}})=12=3k$ with a cop winning strategy from the start vertex~$0$ where~$k=4$. 
	Edges without an explicit label are 1-edges.}
	\label{fig:k2i}
\end{figure}
\begin{table}[ht]
	\centering
	\begin{tabular}{r || r | r}
		time step & pos.~cop & pos.~robber\\
		\hline
		$s$ & 0 & 11\\
		0 & 1 & 11\\
		1 & 2 & 11\\
		2 & 3 & 11\\
		3 & 4 & 12\\
		4 & 5 & 13\\
		5 & 6 & 14\\
		6 & 7 & 14\\
		7 & 8 & 14\\
		8 & 9 & 15\\
		9 & 10 & 15\\
		10 & 11 & 15\\
		11 & 12 & 16\\
		12 & 13 & 17\\
		13 & 14 & 17\\
		14 & 15 & 18\\
		15 & 16 & 19\\
		16 & 17 & 19\\
		17 & 18 & 19\\
		18 & 19 & $\skull$
	\end{tabular}
	\hfil
	\begin{tabular}{r || r | r}
		time step & pos.~cop & pos.~robber\\
		\hline
		$s$ & 0 & 12\\
		0 & 22 & 12\\
		1 & 21 & 12\\
		2 & 20 & 12\\
		3 & 19 & 11\\
		4 & 18 & 10\\
		5 & 17 & 9\\
		6 & 16 & 9\\
		7 & 15 & 9\\
		8 & 14 & 8\\
		9 & 13 & 8\\
		10 & 12 & 8\\
		11 & 11 & 7\\
		12 & 10 & 6\\
		13 & 9 & 6\\
		14 & 8 & 5\\
		15 & 7 & 4\\
		16 & 6 & 4\\
		17 & 5 & 4\\
		18 & 4 & $\skull$
	\end{tabular}
	\caption{Time steps with corresponding positions of cop and robber in the edge periodic cycle depicted in Figure~\ref{fig:k2i}.  All positions are \emph{after} moving in this time step. The time step $s$ denotes the start configuration. Recall that the cop moves first.}
	\label{tab:k2i}
\end{table}

\end{document}